\documentclass[10pt,journal,compsoc]{IEEEtran}

\usepackage{amsmath,amsfonts}
\usepackage{amsthm}
\usepackage[ruled,vlined,linesnumbered]{algorithm2e}
\usepackage{array}
\usepackage{booktabs}
\usepackage{graphicx}
\usepackage{multirow}
\usepackage[table]{xcolor}
\definecolor{keyrow}{RGB}{190,214,250}
\definecolor{headrow}{gray}{0.75}
\definecolor{keyyellow}{RGB}{255,241,140}
\newcommand{\hb}[1]{\tikz[baseline=-0.55ex]{\draw[line width=0.3pt] (0,0) circle (0.75ex); \fill (0,0) -- (90:0.75ex) arc [start angle=90, end angle={90-360*#1}, radius=0.75ex] -- cycle;}}
\usepackage{etoolbox}
\AtBeginEnvironment{tabular}{\rowcolors{2}{gray!10}{white}}
\usepackage{url}
\usepackage{tikz}
\usepackage{listings}
\usepackage[hidelinks]{hyperref}
\usetikzlibrary{arrows.meta,positioning,calc}
\newtheorem{proposition}{Proposition}
\newtheorem{theorem}{Theorem}
\newtheorem{corollary}{Corollary}

\newcommand{\method}{FedMark-FM}
\newcommand{\valuator}{S3Val}
\newcommand{\bench}{FedMark-FM-Bench}
\begin{document}
\raggedbottom

\title{FedMark-FM: Auditable, Risk-Adjusted Data Markets for Federated Foundation-Model Adaptation}

\author{Phat~T.~Tran-Truong\textsuperscript{1,2,\,*},~
Xuan-Bach~Le\textsuperscript{1,2,\,\dag},~and~
Minh~Nhat~Nguyen\textsuperscript{3}%
\thanks{\textsuperscript{1}Faculty of Computer Science and Engineering, Ho Chi Minh City University of Technology (HCMUT), 268 Ly Thuong Kiet Street, Dien Hong Ward, Ho Chi Minh City, Vietnam.}%
\thanks{\textsuperscript{2}Vietnam National University Ho Chi Minh City, Linh Xuan Ward, Ho Chi Minh City, Vietnam.}%
\thanks{\textsuperscript{3}RMIT University, Ho Chi Minh City, Vietnam.}%
\thanks{E-mail: phatttt@hcmut.edu.vn; lexuanbach@hcmut.edu.vn; minh.nguyen244@rmit.edu.vn.}%
\thanks{\textsuperscript{*}First author.\quad\textsuperscript{\dag}Corresponding author.}}

\markboth{IEEE Transactions on Knowledge and Data Engineering}{Tran-Truong \MakeLowercase{\textit{et al.}}: FedMark-FM: Auditable, Risk-Adjusted Data Markets for Federated Foundation-Model Adaptation}

\maketitle

\begin{abstract}
Federated foundation-model adaptation increasingly relies on heterogeneous private artifacts---retrieval corpora, prompts and demonstrations, LoRA adapters, preference and safety data, and update sketches---yet existing federated-learning incentive mechanisms price clients as homogeneous data or update providers. This assumption is poorly matched to foundation-model pipelines, where contribution value is heterogeneous, non-IID, pipeline-dependent, privacy-constrained, and vulnerable to strategic behavior. We propose \textbf{\method{}} (a \textbf{Fed}erated \textbf{Mark}et for \textbf{F}oundation \textbf{M}odels), an auditable, risk-adjusted data-market framework that models clients as sellers of typed artifacts, estimates marginal contribution with \textbf{\valuator{}} (\textbf{S}ecure \textbf{S}urrogate \textbf{S}hapley \textbf{Val}uation)---a stratified, uncertainty-aware Shapley estimator that also supports pipeline-ordered valuation---and converts lower-confidence-bound values into budget-feasible payments that penalize duplication, sybil splitting, poisoned adapters, privacy-budget gaming, and cost inflation. We evaluate \textbf{\bench{}} (the \method{} \textbf{Bench}mark) across FEVER retrieval, held-out generator-backed RAG, and trained PEFT/LoRA tracks. Under a held-out prompt-injection poisoner, \method{} improves downstream accuracy by 7.5--8.1 points over volume, leave-one-out, and FL-Shapley while selecting zero strategic clients. Split-conformal calibration reaches full lower-bound coverage at mean width 0.0141, versus 0.33 for naive intervals. We prove that pipeline-ordered valuation is the unique credit rule respecting serving causality, and show it materially changes credit assignment (Spearman 0.76, selected-set overlap 0.67) while leaving held-out task quality unchanged; the market also preserves rare specialists with audit-ready ledgers at 200--1000-client scale. \method{} shows that incentives for federated foundation models can be engineered as auditable data infrastructure that couples valuation, mechanism design, privacy interfaces, and pipeline-order semantics.
\end{abstract}

\begin{IEEEkeywords}
Federated foundation models, data markets, data valuation, incentive mechanisms, retrieval-augmented generation, LoRA adapters, data-centric AI.
\end{IEEEkeywords}

\section{Introduction}

Foundation models are increasingly deployed as data pipelines rather than monolithic networks. A production assistant, search stack, or enterprise copilot answers a query by composing retrieved passages, prompt templates, in-context demonstrations, parameter-efficient adapters, preference data, and safety probes, and these artifacts are refreshed continuously after deployment. In cross-silo settings they are owned by different organizations---hospitals, data vendors, safety labs---that cannot or will not centralize raw data. Realizing federated foundation models (FedFMs) at this scale is therefore as much an economic problem as an algorithmic one: a shared pipeline improves only if the owners of these private artifacts are paid for their contribution, and they participate only if that contribution is measured and rewarded fairly, at scale, and with evidence they can dispute.

Two research lines bear on this problem. Federated-learning (FL) incentive mechanisms allocate rewards for participation and risk sharing, typically through Shapley-based or auction-based schemes over datasets or model updates \cite{kairouz2021advancesfl,yang2023flshapleypareto}. Data-valuation methods, led by Data Shapley and its efficient approximations, quantify each source's marginal contribution to a trained model \cite{ghorbani2019datashapley,jia2019efficient}. Recent FedFM surveys, in turn, raise incentives, game mechanisms, privacy, and heterogeneity to first-order open problems \cite{fan2025fedfmchallenges}.

However, these approaches price a client as a homogeneous provider of a dataset or gradient under a fixed supervised objective, and stop at a scalar reward computed after training. A deployable FedFM market needs four properties that no existing method provides together: heterogeneous artifacts must be traded as \emph{typed}, separately priced products; credit must respect the \emph{serving order}---retrieval precedes prompting, which precedes adaptation, which precedes safety---rather than conflating upstream and downstream contributions; valuation must \emph{scale under privacy constraints} that forbid inspecting raw artifacts; and payments must leave an \emph{auditable ledger} that can be disputed and revalued when the model, retriever, or policy changes.

In this paper we propose \method{}, an auditable, risk-adjusted data-market framework for FedFM adaptation that supplies all four properties. A client contributes typed artifacts---retrieval corpora, LoRA adapters, prompts, demonstrations, preference data, safety data, or update sketches---and the market evaluates coalitions through a contract utility spanning task quality, safety, robustness, latency, privacy budget, and cost. Its central conceptual move is to stop treating FedFM artifacts as exchangeable players: because retrieval changes the context seen by prompts and adapters, \method{} assigns credit along the contract-visible serving order, and we prove that this pipeline-ordered rule is the \emph{unique} credit assignment consistent with serving causality (Theorem~\ref{thm:ordered-char}).

Because exhaustive coalition evaluation is infeasible, \method{} estimates value with \valuator{} (Secure Surrogate Shapley Valuation), a stratified, uncertainty-aware Shapley estimator that combines contribution sketches, redundancy clustering, pipeline-ordered sampling, and a learned utility surrogate, and converts lower-confidence-bound values into budget-feasible payments that penalize duplication, sybil splitting, poisoned adapters, privacy-budget gaming, and cost inflation. On real FEVER retrieval, held-out generator-backed RAG, and trained PEFT/LoRA tracks, \method{} improves downstream accuracy by 7.5--8.1 points over volume, leave-one-out, and FL-Shapley under a held-out prompt-injection poisoner while selecting zero strategic clients, and its split-conformal payments attain full lower-bound coverage at mean interval width 0.0141.

To make the gap concrete, suppose a hospital contributes a small but unique biomedical retrieval corpus, a web vendor contributes a large but redundant one, a strategic client copies the hospital corpus under another identity, and a safety lab contributes probes that improve policy compliance without improving ordinary retrieval. Volume-based rewards overpay the vendor and the duplicate; a naive validation-loss delta underpays the hospital when biomedical queries are rare; and a standard Shapley estimator becomes too expensive once each coalition requires adapter routing or RAG evaluation. \method{} instead values typed artifacts under a contract utility, stratifies evaluations by domain and artifact type, discounts duplicate clusters, rewards scarce safety contributions, and audits high-risk payments, while retaining enough evidence to contest a valuation without exposing raw records. The framework is domain-agnostic; a licensed clinical or enterprise case study is future work rather than a claim of the present paper.

This work makes five contributions:

\begin{itemize}
  \item \textbf{Typed artifact market.} A FedFM data market that treats heterogeneous artifacts as first-class contribution units (Section~\ref{sec:formulation}).
  \item \textbf{Pipeline-ordered valuation.} A serving-aware credit rule that we prove is the \emph{unique} value satisfying standard axioms plus a serving-causal downstream-realization axiom, routing each coalition's complementarity dividend to its serving frontier (Theorem~\ref{thm:ordered-char}; Section~\ref{sec:formulation}).
  \item \textbf{Scalable valuation (\valuator{}).} A scalable, privacy-constrained valuation algorithm combining contribution sketches, stratified coalition sampling, surrogate modeling, and uncertainty-triggered audits (Section~\ref{sec:s3val}).
  \item \textbf{Risk-adjusted payments.} A budget-feasible payment rule on lower-confidence-bound values with penalties for cost, privacy, duplication, and manipulation risk (Section~\ref{sec:payment}).
  \item \textbf{Benchmark and system.}\footnote{Source code, benchmark harness, data, and reproduction scripts: \url{https://anonymous.4open.science/r/FedMark-FM-3A89}.} A deployed-style architecture, benchmark, and stress-test suite covering sybil, duplicate, poison, non-IID-suppression, privacy-gaming, and cost-inflation attacks (Sections~\ref{sec:system}, \ref{sec:bench}, \ref{sec:eval}).
\end{itemize}

\textbf{Pipeline-ordered credit.} \method{} supports both unordered \valuator{} and an ordered variant that restricts marginal-credit permutations to the contract-visible serving precedence, so a retrieval client is credited for improving downstream context while an adapter is credited conditional on the retrieval and prompt state that precedes it at serving time. Empirically, ordered valuation changes selected sets on FEVER (Spearman 0.7554, selected-set overlap 0.6667; risk-adjusted utility 0.2410 unordered vs.\ 0.2425 ordered). A controlled held-out serving study on real FEVER data (supplementary material,\footnote{The supplementary material---appendices covering registry and ledger schemas, notation, extended cost analysis, differential-privacy and attestation details, threat models, and full experimental protocols---is available with this submission. Its sections, tables, and figures are numbered with an ``S'' prefix.} Table~S15) shows that this redistribution is \emph{serving-neutral}: it changes which clients are credited so that payment respects serving causality, while leaving held-out task accuracy statistically unchanged (\(\Delta=+0.004\pm0.014\)). Ordered valuation is therefore a payment-fairness choice rather than an accuracy lever or an implementation detail.

\textbf{Significance.} The contribution is fundamentally a data-management and mechanism-design one that matters wherever private contributions are priced under uncertainty. A market round begins with a registry that binds each private artifact to typed metadata, provenance commitments, privacy limits, and evaluation endpoints. Coalition evaluations then populate a value index with marginal-value estimates, uncertainty, risk flags, and drift evidence. Payments are written to a ledger with formula hashes, budget scaling, and evidence pointers, after which audit logs support disputes and revaluation when the model, retriever, or policy changes. The mechanism-design layer is therefore not isolated game theory; it is the pricing logic inside a deployment-style data system for registering, valuing, paying, and auditing foundation-model adaptation artifacts.

\textbf{Paper organization.} Section~\ref{sec:related} surveys related work. Section~\ref{sec:formulation} formalizes the FedFM market and pipeline-ordered valuation, including its uniqueness characterization. Section~\ref{sec:system} presents the system architecture, Section~\ref{sec:s3val} the \valuator{} estimator, and Section~\ref{sec:payment} the risk-adjusted payment mechanism and its guarantees. Section~\ref{sec:bench} introduces \bench{}, and Section~\ref{sec:eval} reports the real-data evaluation. Section~\ref{sec:discussion} discusses limitations and scope, and Section~\ref{sec:conclusion} concludes.

\section{Related Work}
\label{sec:related}

\subsection{Federated Learning Incentives}

Federated learning enables collaborative model training without centralizing raw data, but its practical adoption depends on incentives for participation, computation, communication, and risk sharing \cite{kairouz2021advancesfl}. Shapley-based reward allocation is widely studied because it satisfies fairness axioms in cooperative games \cite{shapley1953value,yang2023flshapleypareto}. Other work studies adaptive contribution scoring, seller selection, auctions, Stackelberg games, and collaborative fairness in FL markets \cite{zhang2024shapleyucb,wang2024fedave,zhang2025fmincentive}. Classical VCG-style mechanisms and peer-prediction mechanisms offer stronger truthfulness results under restrictive observability and reporting assumptions \cite{vickrey1961counterspeculation,clarke1971multipart,groves1973incentives,miller2005peer}. These mechanisms assume observability and reporting conditions that FedFM artifacts do not meet: artifacts are privacy-constrained, non-verifiable in raw form, and pipeline-dependent. \method{} therefore uses auditable approximate valuation with explicit uncertainty and dispute logs, and treats VCG as a full-observability reference point: VCG attains higher utility when artifacts and utilities are fully observable, whereas \method{} is built for the privacy-constrained case where raw observability is unavailable. Closest in spirit are privacy-aware FL auctions such as \emph{FL-Market}, which trade models under local differential privacy with optimal aggregation \cite{jiang2021flmarket}, and sybil-poisoning defenses such as \emph{FoolsGold}, which down-weight low-diversity gradient updates \cite{fung2018foolsgold}; both, however, target gradient or dataset providers under a supervised objective, whereas \method{} prices \emph{typed} FM artifacts under pipeline causality and can host a gradient-diversity defense inside its update-sketch track rather than replace it. These approaches provide important foundations, but they typically treat each client as contributing a dataset or update under a relatively fixed supervised-learning objective. FedFMs introduce richer artifact types and pipeline dependencies.

\subsection{Data Valuation}

Data valuation estimates the contribution of examples, sources, or clients to a model or task. Data Shapley values allocate utility by averaging marginal contributions over coalitions \cite{ghorbani2019datashapley}. Efficient variants exploit nearest-neighbor structure, gradient similarity, influence approximations, or surrogate models \cite{jia2019efficient,koh2017influence}. DVRL and noise-reduced Shapley variants provide additional peer-reviewed baselines for non-market valuation \cite{yoon2020dvrl,kwon2022betashapley}. Recent foundation-model valuation work studies document credit in LLM summaries and data auctions for RAG \cite{ye2025documentvaluation,han2025ragauctions}. In foundation-model applications, valuation must also handle retrieval, in-context learning, fine-tuning, preference optimization, safety data, and unlearning audits. \method{} builds on Shapley-style marginal value but treats valuation as an operational layer over heterogeneous artifacts and market logs.

\subsection{Federated Foundation Models}

Federated foundation models combine the general capabilities of foundation models with privacy-preserving multi-client collaboration. Recent surveys and challenge papers identify private data use, non-IID heterogeneity, bidirectional knowledge transfer, incentives, game mechanisms, privacy, security, watermarking, and efficiency as open problems \cite{fan2025fedfmchallenges}. FM-enabled cross-silo incentive work has begun to study knowledge hoarding, free-riding, and Stackelberg-style compensation \cite{zhang2025fmincentive}. This paper targets the data-engineering part of that agenda: how to price heterogeneous artifacts that improve a shared foundation-model pipeline without centralizing raw private data.

\subsection{Foundation-Model Adaptation Artifacts}

Modern foundation-model systems can be adapted through retrieval-augmented generation (RAG), prompt engineering, in-context demonstrations, LoRA or other parameter-efficient adapters, preference data, safety data, and tool traces \cite{lewis2020rag,hu2022lora,ouyang2022training}. Federated PEFT and LoRA studies show that adapter rank, aggregation, initialization, robustness, privacy noise, and heterogeneity matter in collaborative foundation-model tuning \cite{cho2024hetlora,chen2024rolora,singhal2025fedexlora,bian2024lorafair,wang2024flora,huang2026sfedlora,kou2026winflora}. Adapter merging and routing policies can change coalition interference, so our framework includes proxy comparisons for centroid merging, volume routing, risk-aware routing, orthogonal-dropout-style merging, and hierarchical merging. These works primarily improve aggregation or global-model accuracy. \method{} is complementary: a secure evaluator can run any of these aggregation rules, while the market layer values the submitted adapters, charges for privacy/cost/risk, and records payment evidence.

\subsection{Collaborative RAG and Federated Seller Measurement}

Collaborative RAG studies show that shared passage stores can improve low-resource clients, but that irrelevant or hard-negative passages affect both performance and participation incentives \cite{muhamed2025corag}. Privacy-preserving federated RAG such as \emph{FedE4RAG} learns retriever embeddings under privacy constraints \cite{fede4rag2025}; \method{} is compatible with such systems, treating a federated retriever as an upstream artifact to be valued and paid. Decentralized data-market work on federated data measurements ranks sellers with relevance and diversity signals without training task-specific models \cite{lu2024datameasurements}. \method{} treats these measurements as useful priors rather than replacements for valuation: relevance/diversity sketches initialize redundancy clusters and sampling priorities, while secure coalition evaluation still determines auditable payments under duplicate, poison, and scarcity adjustments.

\subsection{Structure-Aware Valuation and Secure Incentive Systems}

Recent structure-aware valuation work argues that classical Shapley symmetry can be inappropriate when data sources enter ordered pipelines. Asymmetric Data Shapley relaxes symmetry to respect group or temporal precedence \cite{zheng2024ads,zheng2025ads}, and precedence-constrained Winter values extend constrained-permutation valuation to graph dependencies \cite{chi2024pcwinter}; our pipeline-ordered value is a serving-pipeline instance of this broader family of constrained-permutation valuations. This is directly relevant to foundation-model systems, where retrieval corpora, prompts, adapters, and preference/safety data may be evaluated in an ordered serving path. Efficient valuation work such as Owen-sampling-based FL contribution estimation and data-free spectral-entropy metrics provide low-cost alternatives or priors for S3Val's sampling and sketch layers \cite{li2025fedowen,entropy2026datafree}. Secure system work such as C-FedRAG focuses on confidential federated retrieval, while FWeb3 focuses on incentive-aware FL settlement with Web3 support services \cite{addison2024cfedrag,yan2026fweb3}. \method{} is complementary: it supplies typed artifact valuation, risk-adjusted payments, DP/enclave evidence, and dispute ledgers that can sit above confidential retrieval or settlement substrates.

Table~S8 summarizes the gap. The closest prior lines each solve part of the problem, but \method{} treats the market itself as a deployable data-management layer for foundation-model artifacts.

\section{Problem Formulation}
\label{sec:formulation}

\subsection{Market Participants}

The market contains a set of clients \(N=\{1,\ldots,n\}\), a market operator, downstream consumers, and optionally an external auditor. Each client \(i\) owns a private portfolio \(\Pi_i=\{z_{i1},\ldots,z_{ik_i}\}\). A contribution \(z\) is a typed artifact:

\[
z=(type, payload, metadata, policy).
\]

The payload may remain private. The operator may observe signed hashes, provenance commitments, differentially private summaries, adapter fingerprints, secure evaluation outputs, or local evaluation certificates. The operator should not need raw private corpora, private test sets, or sensitive user records.

\subsection{Contribution Types}

Table~\ref{tab:contributions} summarizes the main contribution classes, and Figure~\ref{fig:taxonomy} shows the corresponding heterogeneous contribution taxonomy. A full deployment binds each artifact type to a separate secure evaluation interface.

\begin{table}[t]
\centering
\caption{FedFM market contribution types.}
\label{tab:contributions}
\footnotesize
\begin{tabular}{@{}p{0.22\linewidth}p{0.25\linewidth}p{0.33\linewidth}@{}}
\toprule
\rowcolor{headrow} Type & Example & Evaluation interface \\
\midrule
Retrieval corpus & Private passages & Federated retriever \\
Adapter & LoRA checkpoint & Adapter routing/merge \\
Prompt & Template & Contract prompt probes \\
Demonstration & ICL examples & Context selection \\
Preference data & Pairwise labels & Reward/alignment probes \\
Safety data & Red-team cases & Hidden safety tests \\
Update sketch & FL gradient/update & Secure aggregation \\
\bottomrule
\end{tabular}
\end{table}

\begin{figure}[t]
\centering
\resizebox{\linewidth}{!}{%
\begin{tikzpicture}[
  artifact/.style={draw, rounded corners=1pt, align=center, font=\scriptsize, minimum width=1.55cm, minimum height=0.55cm},
  layer/.style={draw, align=center, font=\scriptsize, minimum width=2.0cm, minimum height=0.65cm},
  arrow/.style={-{Latex[length=1.4mm]}, thick}
]
\node[layer] (market) {FedFM\\market};
\node[artifact, above left=0.35cm and 1.35cm of market] (rag) {retrieval\\corpus};
\node[artifact, above=0.35cm of market] (prompt) {prompts\\demos};
\node[artifact, above right=0.35cm and 1.35cm of market] (adapter) {LoRA\\adapter};
\node[artifact, below left=0.35cm and 1.35cm of market] (pref) {preference\\data};
\node[artifact, below=0.35cm of market] (safe) {safety\\cases};
\node[artifact, below right=0.35cm and 1.35cm of market] (update) {update\\sketch};
\node[layer, right=2.2cm of market] (eval) {secure\\evaluation};
\node[layer, right=1.2cm of eval] (pay) {value, risk,\\payment};
\draw[arrow] (rag)--(market);
\draw[arrow] (prompt)--(market);
\draw[arrow] (adapter)--(market);
\draw[arrow] (pref)--(market);
\draw[arrow] (safe)--(market);
\draw[arrow] (update)--(market);
\draw[arrow] (market)--node[above,font=\tiny]{typed registry} (eval);
\draw[arrow] (eval)--node[above,font=\tiny]{evidence} (pay);
\end{tikzpicture}
}
\caption{Heterogeneous contribution taxonomy: RAG corpora, prompts, demonstrations, adapters, preference data, safety cases, and update sketches as typed market artifacts.}
\label{fig:taxonomy}
\end{figure}

\subsection{Contract Utility}

For a coalition \(C \subseteq N\), let \(\Pi(C)=\cup_{i\in C}\Pi_i\). The market utility is:

\begin{align}
U(C) ={}& w_q Q(C) + w_s S(C) + w_r R(C) \nonumber \\
&- w_l L(C) - w_p \Phi(C) - w_k K(C),
\label{eq:utility}
\end{align}

where \(Q\) is task quality, \(S\) is safety compliance, \(R\) is robustness, \(L\) is latency or serving overhead, \(\Phi\) is privacy-budget consumption or leakage risk, and \(K\) is financial/compute cost. The weights are contract parameters and must be reported as part of the benchmark card.

\subsection{Contribution Value}

The ideal client value is the Shapley-style marginal contribution:

\begin{equation}
\phi_i = \mathbb{E}_{\omega}\left[ U(Pre_{\omega}(i)\cup\{i\}) - U(Pre_{\omega}(i)) \right],
\label{eq:clientvalue}
\end{equation}

where \(\omega\) is a random client ordering and \(Pre_{\omega}(i)\) are clients before \(i\). Artifact-level values are defined analogously for each \(z_{ij}\). In practice, direct computation is infeasible because it requires many RAG indexes, adapter compositions, prompt evaluations, or secure aggregation rounds.

\subsection{Pipeline-Ordered Contribution Value}

The exchangeability implicit in Eq.~\eqref{eq:clientvalue} is often wrong for foundation-model serving. A RAG passage is upstream of a prompt, an adapter acts after retrieved context is selected, and preference or safety data may only matter after the answer policy is invoked. We therefore define a pipeline-ordered value, \(\phi_i^{ord}\), by restricting permutations to respect a partial order over artifact groups:
\[
\begin{aligned}
\mathcal{G}: \quad &\mathrm{retrieval}\prec \mathrm{prompt/demo}\\
&\prec \mathrm{adapter}\prec \mathrm{preference/safety}.
\end{aligned}
\]
Let \(g(i)\in\mathcal{G}\) be the pipeline group of client \(i\). Let \(\Omega_{\mathcal{G}}\) be the set of client permutations in which all clients from earlier groups appear before clients from later groups; clients inside the same group remain randomly ordered. The ordered value is
\begin{equation}
\phi_i^{ord} =
\mathbb{E}_{\omega\sim\Omega_{\mathcal{G}}}
\left[
U(Pre_{\omega}^{\mathcal{G}}(i)\cup\{i\})
-U(Pre_{\omega}^{\mathcal{G}}(i))
\right].
\label{eq:orderedvalue}
\end{equation}
This formulation is an Asymmetric-Data-Shapley-style relaxation of symmetry for FedFM markets: it preserves marginal-credit accounting but uses the serving path as a contract-visible precedence constraint. The practical implication is simple. A retrieval client is credited for improving the context available to downstream prompts and adapters, while an adapter is credited conditional on the retrieval/prompt state that would actually precede it at serving time. The contract card records whether a market round uses unordered S3Val or pipeline-ordered S3Val, and the ledger stores the group order so disputed payments can be replayed.

\subsection{Axiomatic Characterization of Ordered Valuation}
\label{sec:ordered-axioms}

Equation~\eqref{eq:orderedvalue} is one way to respect serving order; we now show it is the \emph{principled} one. We characterize \(\phi^{ord}\) as the unique credit rule satisfying four standard value axioms plus one market axiom that encodes serving causality. This turns pipeline-ordered valuation from a heuristic relaxation into the Shapley value for the layered precedence structure induced by \(\mathcal{G}\), in the sense of games under precedence (permission) constraints \cite{faigle1992precedence,owen1977values}.

A \emph{pipeline game} is a tuple \((N,U,\mathcal{G})\) with \(U(\emptyset)=0\) and \(\mathcal{G}=(G_1\prec\cdots\prec G_L)\) an ordered partition of \(N\) into serving layers, where \(g(i)\) is the layer index of client \(i\). A \emph{value} \(\psi\) maps each pipeline game to a payoff vector \((\psi_i)_{i\in N}\). For \(\emptyset\neq T\subseteq N\), the unanimity (pure-complementarity) game is \(u_T(S)=\mathbf{1}[T\subseteq S]\), and \(\ell(T)=\max\{g(k):k\in T\}\) is the most downstream layer that \(T\) reaches. We impose, on the class of pipeline games with fixed \(\mathcal{G}\):

\begin{itemize}
\item[(A1)] \textbf{Linearity.} \(\psi(\alpha U+\beta V)=\alpha\psi(U)+\beta\psi(V)\).
\item[(A2)] \textbf{Null player.} If \(U(S\cup\{i\})=U(S)\) for all \(S\), then \(\psi_i(U)=0\).
\item[(A3)] \textbf{Efficiency.} \(\sum_{i\in N}\psi_i(U)=U(N)\).
\item[(A4)] \textbf{Within-layer symmetry.} If \(i,j\in G_a\) and \(U(S\cup\{i\})=U(S\cup\{j\})\) for all \(S\subseteq N\setminus\{i,j\}\), then \(\psi_i(U)=\psi_j(U)\).
\item[(A5)] \textbf{Downstream realization.} For every \(T\), any member \(i\in T\) with \(g(i)<\ell(T)\) receives none of \(T\)'s pure complementarity: \(\psi_i(u_T)=0\).
\end{itemize}

Axiom~(A5) is the serving-causal content: when a coalition's value is realized only once all members are assembled and served, the marginal is emitted at the serving frontier \(G_{\ell(T)}\), so an upstream member that shares that synergy with a strictly-downstream partner draws no credit from it. The symmetric Shapley value violates~(A5); it splits every complementarity dividend equally across all of \(T\). We state~(A5) as a deliberate contract-level \emph{design} choice about serving-time attribution---the market elects to credit synergy where it is realized---rather than a claim that upstream inputs are normatively worthless; an operator who prefers to reward necessary upstream inputs can simply run unordered \valuator{}, and the contract card records which rule is in force.

\begin{theorem}[Serving-order characterization]
\label{thm:ordered-char}
On pipeline games with layered precedence \(\mathcal{G}\), \(\phi^{ord}\) is the unique value satisfying (A1)--(A5). Moreover it routes each coalition's Harsanyi dividend to that coalition's most downstream members, split equally:
\[
\phi_i^{ord}(u_T)=
\begin{cases}
1/\,|T\cap G_{\ell(T)}|, & i\in T\cap G_{\ell(T)},\\
0, & \text{otherwise.}
\end{cases}
\]
\end{theorem}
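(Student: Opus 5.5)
The proof follows the classical Shapley template via the unanimity basis. Every game with \(U(\emptyset)=0\) has a unique expansion \(U=\sum_{\emptyset\neq T\subseteq N}\Delta_U(T)\,u_T\), where \(\Delta_U(T)=\sum_{S\subseteq T}(-1)^{|T|-|S|}U(S)\) is the Harsanyi dividend. By linearity (A1), any value is determined by its action on the \(u_T\). The argument therefore has two parts. First, compute \(\phi^{ord}(u_T)\) directly and check (A1)--(A5). Second, show that (A2)--(A5) pin down \(\psi(u_T)\) uniquely and that it equals the displayed formula.

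\emph{Existence and the formula.} Linearity of \(\phi^{ord}\) is immediate from Eq.~\eqref{eq:orderedvalue}, since it is an expectation of marginals that are linear in \(U\). Efficiency holds because, for each fixed \(\omega\in\Omega_{\mathcal{G}}\), the marginals telescope to \(U(N)-U(\emptyset)=U(N)\). The null-player axiom is trivial because each marginal of a null player vanishes. For within-layer symmetry, the uniform distribution on \(\Omega_{\mathcal{G}}\) is invariant under transposing \(i,j\in G_a\); the usual swap argument then applies, since that transposition preserves membership in \(\Omega_{\mathcal{G}}\).

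For the formula, fix \(T\) and \(\omega\in\Omega_{\mathcal{G}}\). The marginal of \(i\) in \(u_T\) equals \(1\) exactly when \(i\) is the last member of \(T\) to appear in \(\omega\), and \(0\) otherwise. Since layers appear in order, the last member of \(T\) lies in \(G_{\ell(T)}\). Within that layer the order is uniformly random, so each member of \(T\cap G_{\ell(T)}\) is last with probability \(1/|T\cap G_{\ell(T)}|\). This gives the displayed formula, and (A5) follows from it.

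\emph{Uniqueness.} Let \(\psi\) satisfy (A1)--(A5) and fix \(T\). Players outside \(T\) are null in \(u_T\), so (A2) gives them \(0\). Members of \(T\) with \(g(i)<\ell(T)\) receive \(0\) by (A5). The remaining members \(i,j\in T\cap G_{\ell(T)}\) lie in the same layer and satisfy \(u_T(S\cup\{i\})=u_T(S\cup\{j\})\) for all \(S\subseteq N\setminus\{i,j\}\): both sides are \(0\) because the other player is missing. Hence (A4) forces equal shares. Efficiency (A3) then gives total \(u_T(N)=1\), so each share is \(1/|T\cap G_{\ell(T)}|\). Linearity extends this to all \(U\), using the fact that any value on the class is determined by its restriction to the basis, and gives \(\psi=\phi^{ord}\).

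\emph{Main obstacle.} The delicate step is the scope of (A1) and the class of games. Linearity must be applied to arbitrary real combinations of unanimity games, including games with negative dividends, so the class of pipeline games with fixed \(\mathcal{G}\) must be the full vector space of \(U\) with \(U(\emptyset)=0\). I would state this explicitly.

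A second point needs care in the symmetry step. The frontier set \(T\cap G_{\ell(T)}\) is nonempty by definition of \(\ell(T)\), so the share \(1/|T\cap G_{\ell(T)}|\) is always well-defined and no division by zero can occur. The remaining check is that (A4)'s hypothesis is verified for \(u_T\) over all \(S\subseteq N\setminus\{i,j\}\), which the argument above already handles.
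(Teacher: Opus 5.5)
Your proposal is correct and follows essentially the same route as the paper. Both verify (A1)--(A5) and the dividend-routing formula for \(\phi^{ord}\) via the last-member-of-\(T\) argument under \(\Omega_{\mathcal{G}}\), then pin down \(\psi(u_T)\) on the unanimity basis using (A2), (A5), (A4) and (A3), and extend by linearity. Your explicit remarks that the game class must be the full vector space of games with \(U(\emptyset)=0\) and that \(T\cap G_{\ell(T)}\) is nonempty make precise two points the paper leaves implicit.
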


\begin{IEEEproof}
\emph{\(\phi^{ord}\) satisfies the axioms.} Marginal contributions are linear in \(U\) and averaging over \(\Omega_{\mathcal{G}}\) preserves linearity, giving (A1). A null player has zero marginal in every order, giving (A2). Each order telescopes, \(\sum_i[U(Pre^{\mathcal{G}}_\omega(i)\cup\{i\})-U(Pre^{\mathcal{G}}_\omega(i))]=U(N)-U(\emptyset)=U(N)\), and averaging preserves the sum, giving (A3). Transposing two clients in the same layer is a measure-preserving bijection of \(\Omega_{\mathcal{G}}\), so symmetric same-layer clients receive equal value, giving (A4). Finally, in \(u_T\) a client \(i\) has marginal \(1\) in order \(\omega\) iff \(i\in T\) and every other member of \(T\) precedes \(i\), i.e.\ \(i\) is the last member of \(T\) in \(\omega\); under \(\Omega_{\mathcal{G}}\) that last member lies in \(G_{\ell(T)}\) and is uniform over \(T\cap G_{\ell(T)}\), which yields the displayed formula and in particular (A5).

\emph{Uniqueness.} The unanimity games \(\{u_T\}_{\emptyset\neq T\subseteq N}\) form a basis of the space of games with \(U(\emptyset)=0\); write \(U=\sum_T c_T u_T\). By (A1), \(\psi_i(U)=\sum_T c_T\psi_i(u_T)\), so \(\psi\) is fixed once \(\{\psi(u_T)\}\) is fixed. Take any \(T\). Clients \(i\notin T\) are null in \(u_T\), so (A2) gives \(\psi_i(u_T)=0\); clients \(i\in T\) with \(g(i)<\ell(T)\) give \(\psi_i(u_T)=0\) by (A5). The remaining clients lie in \(T\cap G_{\ell(T)}\), share the layer \(G_{\ell(T)}\), and are mutually symmetric in \(u_T\), so (A4) equates their values to a common \(s\). Efficiency (A3) forces \(|T\cap G_{\ell(T)}|\,s=u_T(N)=1\), hence \(s=1/|T\cap G_{\ell(T)}|\). Thus \(\psi(u_T)=\phi^{ord}(u_T)\) for every \(T\), and by linearity \(\psi=\phi^{ord}\).
\end{IEEEproof}

Replacing (A4)--(A5) by full symmetry recovers the classical Shapley value, which splits each \(c_T\) equally over all of \(T\); ordered valuation instead routes \(c_T\) to the serving frontier. Two consequences matter for the market. First, because Theorem~\ref{thm:ordered-char} fixes the value on the unanimity basis, contracts that agree on \(\mathcal{G}\) yield identical ordered credit, and the budget-feasible scaling of the payment rule in Eq.~\eqref{eq:payment} multiplies all positive values by a common factor, preserving the within-layer order the theorem induces. Second, combined with the held-out serving-neutrality result (Table~S15), ordered valuation rests on two pillars: it is the \emph{unique} serving-causal credit rule, and it costs no measurable held-out task quality. It is therefore a principled payment-fairness guarantee rather than an accuracy heuristic.

\noindent\textbf{Worked example.} Consider two retrieval clients \(r_1,r_2\) in layer \(G_1\) and one adapter client \(a\) in layer \(G_2\), with \(U(\emptyset)=0\), \(U(\{r_1\})=U(\{r_2\})=2\), \(U(\{a\})=0\), \(U(\{r_1,r_2\})=3\), \(U(\{r_1,a\})=U(\{r_2,a\})=5\), and \(U(N)=6\): the adapter is useless without retrieved context but adds \(3\) once either retrieval client is present. The nonzero Harsanyi dividends are \(c_{\{r_i\}}=2\), \(c_{\{r_1,r_2\}}=-1\), \(c_{\{r_i,a\}}=3\), and \(c_N=-3\). The symmetric Shapley value splits each dividend across all of its members and pays \((2,2,2)\). Ordered valuation instead routes the two cross-layer complementarity dividends \(c_{\{r_i,a\}}\) entirely to the serving frontier \(a\), giving \((1.5,1.5,3)\) (Table~\ref{tab:worked-example}): the adapter that realizes the retrieval--adapter synergy at serving time is paid for it, while the retrieval clients keep only their standalone and same-layer credit. Both rules are efficient (\(\sum_i\phi_i=6\)); they differ precisely on who is paid for cross-layer complementarity. Averaging the two serving-order permutations \((r_1,r_2,a)\) and \((r_2,r_1,a)\) in Eq.~\eqref{eq:orderedvalue} reproduces \((1.5,1.5,3)\), confirming the dividend-routing form of Theorem~\ref{thm:ordered-char}.

\begin{table}[t]
\centering
\caption{Worked example: symmetric versus ordered credit for a retrieval\(\prec\)adapter game.}
\label{tab:worked-example}
\scriptsize
\begin{tabular}{lrr}
\toprule
\rowcolor{headrow} Client (layer) & Symmetric Shapley & Ordered value \\
\midrule
\(r_1\) (retrieval) & 2.0 & 1.5 \\
\(r_2\) (retrieval) & 2.0 & 1.5 \\
\(a\) (adapter) & 2.0 & \textbf{3.0} \\
\midrule
\rowcolor{keyrow} Total & 6.0 & 6.0 \\
\bottomrule
\end{tabular}
\end{table}

Beyond attribution, (A5) has a strategic consequence that distinguishes ordered from symmetric credit.

\begin{corollary}[No upstream synergy capture]
\label{cor:no-upstream}
Under \(\phi^{ord}\), client \(i\)'s value draws only on coalitions whose deepest layer is \(i\)'s own:
\(\phi_i^{ord}(U)=\sum_{T:\,i\in T,\ \ell(T)=g(i)} c_T/|T\cap G_{g(i)}|\).
Hence \(i\) receives nothing from any coalition that contains a strictly more downstream member; a positive cross-layer complementarity dividend \(c_T\) with \(\ell(T)>g(i)\) pays \(i\) a share \(c_T/|T|\) under the symmetric Shapley value but \(0\) under \(\phi^{ord}\).
\end{corollary}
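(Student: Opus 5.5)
The plan is to derive the corollary directly from the dividend-routing formula in Theorem~\ref{thm:ordered-char}, using the Harsanyi decomposition on the unanimity basis.

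\emph{Step 1: expand in the unanimity basis.} Write \(U=\sum_{\emptyset\neq T\subseteq N} c_T u_T\), where \(c_T\) are the Harsanyi dividends. This expansion is legitimate because the unanimity games form a basis of games with \(U(\emptyset)=0\), as already used in the uniqueness half of Theorem~\ref{thm:ordered-char}. By linearity (A1), which \(\phi^{ord}\) satisfies, we get \(\phi_i^{ord}(U)=\sum_T c_T\,\phi_i^{ord}(u_T)\).

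\emph{Step 2: apply the displayed formula term by term.} The theorem gives \(\phi_i^{ord}(u_T)=1/|T\cap G_{\ell(T)}|\) exactly when \(i\in T\cap G_{\ell(T)}\), and \(0\) otherwise. Membership \(i\in T\cap G_{\ell(T)}\) means \(i\in T\) and \(g(i)=\ell(T)\). Because \(\ell(T)\ge g(i)\) always holds for \(i\in T\), the condition is equivalent to \(i\in T\) together with \(\ell(T)=g(i)\). Then \(G_{\ell(T)}=G_{g(i)}\), so the sum collapses to \(\sum_{T:\,i\in T,\ \ell(T)=g(i)} c_T/|T\cap G_{g(i)}|\), which is the claimed identity.

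\emph{Step 3: read off the consequences.} If \(T\) contains a member strictly downstream of \(i\), then \(\ell(T)>g(i)\), so \(T\) does not appear in the sum and contributes \(0\) to \(\phi_i^{ord}\). This is just (A5) applied to \(u_T\). For the comparison, recall the classical fact, noted after the theorem, that the symmetric Shapley value satisfies \(\phi_i(u_T)=1/|T|\) for \(i\in T\). By linearity, the dividend \(c_T\) therefore contributes exactly \(c_T/|T|\) to \(i\)'s symmetric Shapley value, and this share is positive whenever \(c_T>0\).

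\emph{Main obstacle.} The only delicate point is interpretive rather than technical. The phrase ``\(i\) receives nothing from any coalition that contains a strictly more downstream member'' must be read dividend-wise: it concerns the contribution of \(c_T\) to \(\phi_i^{ord}\), not the marginal \(U(S\cup\{i\})-U(S)\) for such \(S\). I would state this reading explicitly and check that it agrees with the worked example. There the retrieval clients' values \(1.5\) come only from \(c_{\{r_i\}}=2\) and \(c_{\{r_1,r_2\}}=-1\), each of which has its deepest layer equal to \(G_1\).
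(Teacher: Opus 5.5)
Your proposal is correct and follows the paper's own argument. You expand \(U\) in the unanimity basis, use linearity of \(\phi^{ord}\), and apply the dividend-routing formula of Theorem~\ref{thm:ordered-char} term by term, keeping only the terms with \(g(i)=\ell(T)\), and you spell out details the paper leaves implicit (\(\ell(T)\ge g(i)\) for \(i\in T\), and the symmetric-Shapley share \(c_T/|T|\)).
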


\begin{IEEEproof}
By linearity, \(\phi_i^{ord}(U)=\sum_T c_T\,\phi_i^{ord}(u_T)\); Theorem~\ref{thm:ordered-char} makes \(\phi_i^{ord}(u_T)\) nonzero only when \(g(i)=\ell(T)\), which gives the displayed sum and the vanishing of every term with \(\ell(T)>g(i)\).
\end{IEEEproof}

Two operational consequences follow. First, an upstream provider cannot raise its ordered payment by entering or fabricating cross-layer complementarity, so identity-splitting or collusion aimed at harvesting downstream synergy is unprofitable under ordered credit even before the duplicate penalty of Section~\ref{sec:payment} is applied. Second, credit---and therefore the manipulation surface---concentrates on the serving frontier, which tells the operator to prioritize duplicate and poison audits on the most downstream layer a disputed coalition reaches. A full incentive-compatibility analysis of frontier-concentrated credit is left to future work.

\subsection{Strategic Clients}

Clients may split one portfolio across identities, duplicate another client's artifacts, submit poisoned adapters, overfit public prompts, inflate costs, exaggerate privacy constraints, or collude to create artificial complementarity. \method{} targets approximate manipulation resistance under a bounded strategic model: common manipulations should not produce higher expected profit than honest participation after duplicate penalties, uncertainty discounts, hidden audits, and risk-adjusted payments. This is the appropriate guarantee for privacy-constrained artifacts, whose raw form cannot be verified to support exact dominant-strategy truthfulness.

\section{The \method{} System}
\label{sec:system}

Figure~\ref{fig:architecture} shows the deployed-style architecture, and Figure~\ref{fig:rounddataflow} shows the dataflow within a single market round. The design keeps registry entries, value estimates, payment decisions, and cryptographic evidence as explicit data-engineering artifacts across client, operator, and auditor trust boundaries. The contract card is committed before submissions, and later disputes replay the recorded hashes, utility weights, evaluator evidence, and payment formula.

\begin{figure*}
  \centering
  \includegraphics[width=0.7\linewidth]{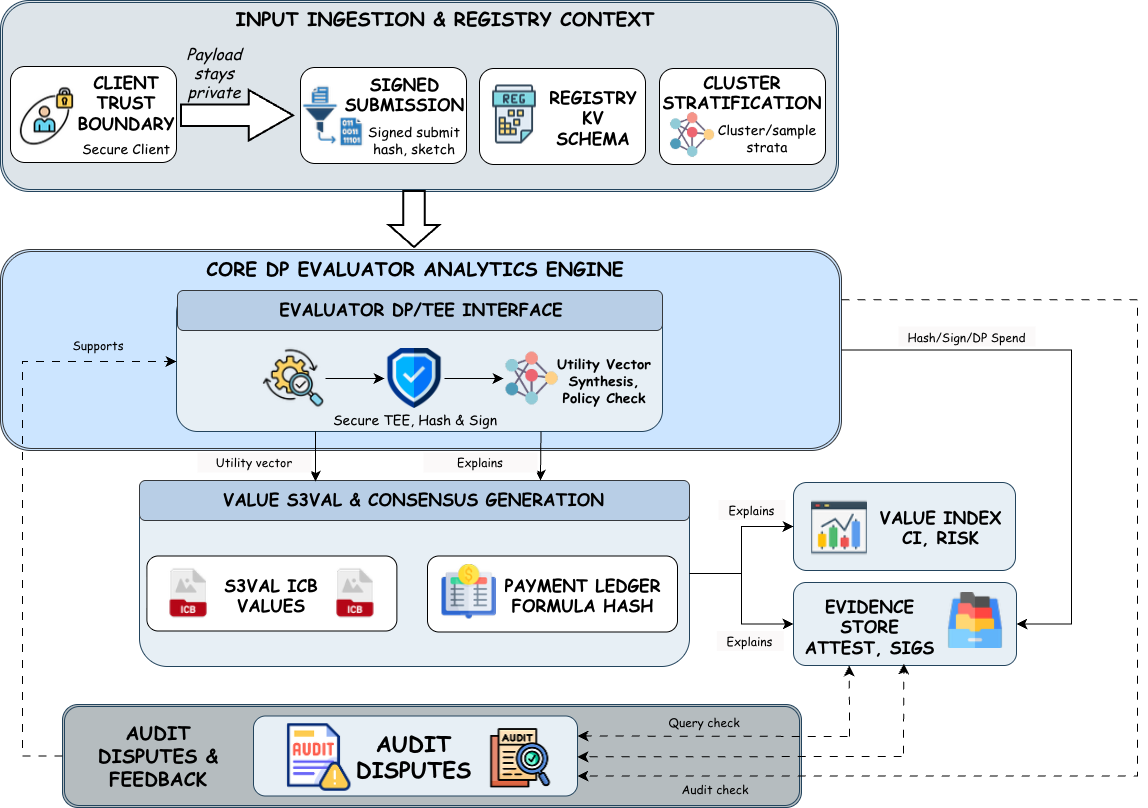}
  \caption{Architecture of \method{}. Solid arrows are data/control flow; dashed arrows are audit/dispute flow.}
  \label{fig:architecture}
\end{figure*}

\begin{figure}
  \centering
  \includegraphics[width=1\linewidth]{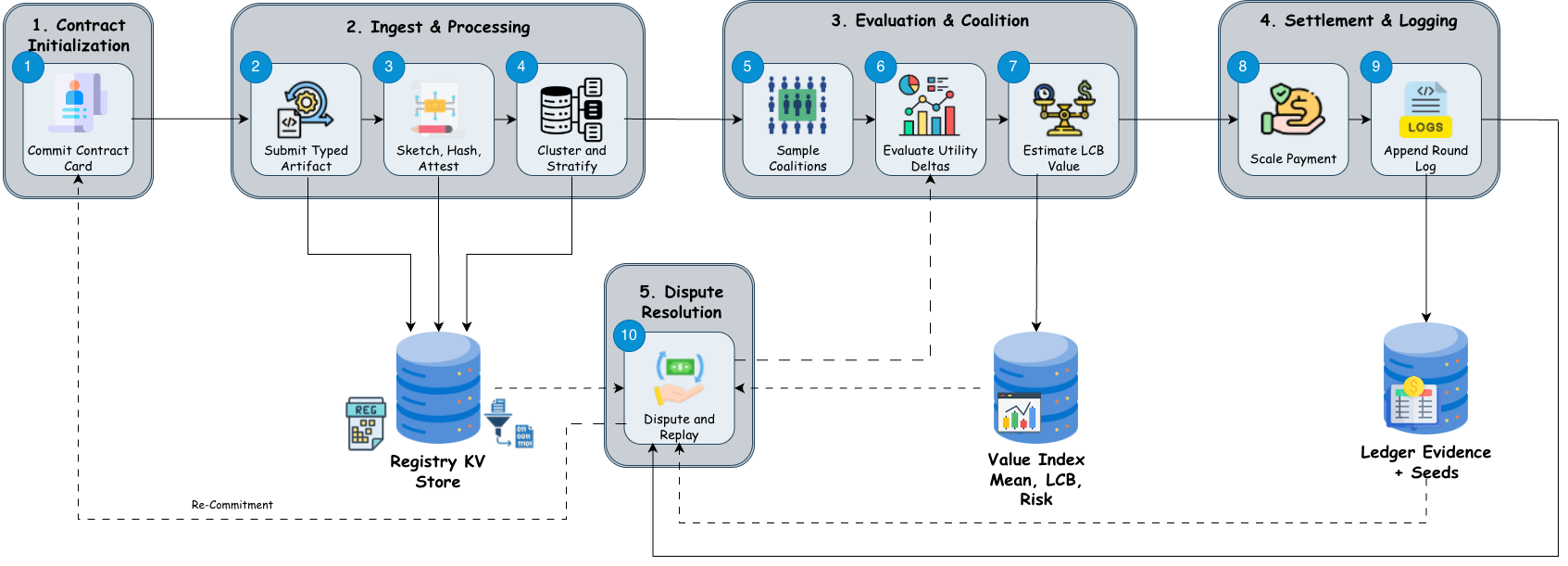}
  \caption{Single market-round dataflow, from contract-card commitment through settlement and dispute replay.}
  \label{fig:rounddataflow}
\end{figure}

\subsection{Contribution Registry}

The registry stores client credentials, artifact type, domain tags, signed hashes, license constraints, privacy budget, declared cost, and artifact endpoint. The registry is not only a database; it defines what can be evaluated and what evidence can be used in disputes.

\subsection{Secure Evaluation Sandbox}

The sandbox evaluates sampled coalitions. For RAG, it builds temporary indexes or queries federated retriever endpoints. For adapters, it loads or routes LoRA modules in isolated workers. For private probes, it accepts signed client-side score certificates. For update sketches, it uses secure aggregation outputs. Each evaluation produces an evidence hash and utility vector.

\subsection{Value Index and Ledger}

The value index stores point estimates, confidence intervals, redundancy clusters, complementarity links, risk flags, and historical value drift. The market ledger stores the payment formula version, utility weights, lower confidence bounds, penalties, bonuses, final payment, and evidence hashes. This is essential for dispute resolution and revaluation after model, prompt, retriever, or policy changes.

\section{\valuator{}: Secure Surrogate Shapley Valuation}
\label{sec:s3val}

\subsection{Overview}

\valuator{} estimates either the unordered value in Eq.~\eqref{eq:clientvalue} or the pipeline-ordered value in Eq.~\eqref{eq:orderedvalue}. The same evidence path is used in both cases; the ordered variant changes only the admissible coalition permutations. The estimator has four components:

\begin{enumerate}
  \item contribution sketches;
  \item redundancy and complementarity clustering;
  \item stratified or pipeline-ordered coalition sampling;
  \item surrogate utility modeling with uncertainty-triggered direct audits.
\end{enumerate}

\subsection{Contribution Sketches}

Each client submits typed metadata and privacy-preserving summaries. A retrieval corpus may submit centroid embeddings, MinHash sketches, domain histograms, and provenance hashes. An adapter may submit rank, target modules, delta norms, calibration traces, and routing hints. A prompt contribution may submit task tags and response statistics. These sketches provide features for valuation while limiting raw data exposure.

\subsection{Coalition Sampling}

Coalitions are sampled by artifact type, domain, redundancy cluster, and risk stratum. The sampler oversamples clients whose value confidence interval crosses zero, whose payment would be large, or whose manipulation risk is high. This matters because a market does not need uniformly precise values for every client; it needs reliable values near payment decisions.

\subsection{Utility Surrogate}

The surrogate \(g_{\theta}(C,i)\) predicts:

\begin{equation}
\Delta_i(C) = U(C\cup\{i\}) - U(C).
\end{equation}

Features include coalition size, artifact mix, domain gaps, duplicate overlap, adapter interference, safety risk, privacy budget, cost, and scarcity. The prototype uses a dependency-free linear surrogate to keep experiments reproducible; the framework can replace it with calibrated gradient-boosted trees, Gaussian processes, or neural models. Algorithm~\ref{alg:s3val} summarizes the full \valuator{} procedure.

\begin{algorithm}[t]
\DontPrintSemicolon
\caption{\valuator{}}
\label{alg:s3val}
\KwIn{Clients \(N\), utility \(U\), strata \(\mathcal{S}\), call budget \(M\), width target \(\tau\), stability window \(T\), audit predicate \(A(i,C,H)\)}
\KwOut{Value ledger}
Collect contribution sketches and provenance commitments\;
Build redundancy and domain clusters\;
Initialize direct set \(D\leftarrow\emptyset\), surrogate \(g_\theta\), widths \(w_i\leftarrow\infty\)\;
\While{\(|D|<M\) and \(\max_i w_i>\tau\) and top-\(k\) payments are not stable for \(T\) updates}{
  Choose stratum \(s\in\mathcal{S}\) with probability proportional to \(1+\bar{w}_s+\bar{r}_s+\bar{b}_s\)\;
  Sample client \(i\) from \(s\) proportional to \(1+w_i+r_i+b_i\), where \(b_i\) is payment-boundary proximity\;
  Sample coalition \(C\subseteq N\setminus\{i\}\) from same/different domain and redundancy strata\;
  \If{\(A(i,C,H)=1\), where \(H\) is the current ledger history}{
    Evaluate \(\Delta_i(C)=U(C\cup\{i\})-U(C)\); append \((x(C,i),\Delta_i(C))\) to \(D\)\;
  }
  Update \(g_\theta\) by ridge regression on \(D\); update residual and marginal-variance widths \(w_i\)\;
}
\ForEach{client \(i\)}{
  Estimate \(\widehat{\phi}_i\) from direct and surrogate marginals\;
  Compute uncertainty, lower confidence bound, duplicate risk, and manipulation risk\;
}
\Return{value ledger}
\end{algorithm}

With fixed strata and bounded marginal variance, a client receiving \(m_i\) direct marginal observations has standard error \(O(1/\sqrt{m_i})\). Thus the LCB width shrinks at the same rate until the stopping rule is met or the call budget is exhausted. The audit predicate is explicit:
\[
A(i,C,H)=\mathbf{1}\{w_i>\tau_w \vee r_i>\tau_r \vee b_i>\tau_b \vee \xi_i<h_{\min}\},
\]
where \(b_i\) is the distance of client \(i\)'s current scaled payment from the budget threshold, \(\xi_i\sim \mathrm{Uniform}(0,1)\), and \(h_{\min}\) is the configured random direct-audit floor. The direct-evaluation cost is \(2|D|\leq 2M\) coalition utility calls, plus sketch clustering and surrogate updates; for \(k\) strata, a balanced allocation gives \(O(M/k)\) direct samples per stratum.

\section{Payment Mechanism}
\label{sec:payment}

\subsection{Lower-Bound Payment Rule}

Let \(\widehat{\phi}_i\) be the estimated value, \(\sigma_i\) its calibrated uncertainty, \(d_i\) duplicate risk, \(r_i\) manipulation risk, \(c_i\) verified or declared cost, \(\epsilon_i^{priv}\) privacy budget, and \(s_i\) scarcity score. The raw payment is the positive part of a single contract score:
\begin{equation}
\widetilde{p}_i =
\left[
\widehat{\phi}_i-\lambda\sigma_i
-\beta c_i-\gamma \epsilon_i^{priv}
-\eta\max(d_i,r_i)+\rho s_i
\right]_+ .
\label{eq:payment}
\end{equation}
Here \(\lambda\) is the uncertainty discount, \(\beta\) the cost penalty, \(\gamma\) the privacy-consumption penalty, \(\eta\) the risk penalty, and \(\rho\) the scarcity bonus weight.

Raw payments \(\widetilde{p}_i\) are scaled if their total exceeds the market budget. Lower-confidence-bound payment is intentional: a high-variance client should not receive a large payment before audit evidence supports the value. Default values are \(\lambda=0.75\), \(\beta=0.28\), \(\gamma=0.20\), \(\eta=0.75\), and \(\rho=0.25\); Table~S11 lists these defaults, and real-track scripts sweep risk, duplicate, privacy, cost, and scarcity settings and record the selected operating point in the contract card.

\textbf{Operational scores.} In the prototype, \(d_i\) is computed from provenance hash matches when available and otherwise from token/trigram overlap for retrieval passages or train-text signatures for adapter proxies. The manipulation score \(r_i\) is an audit prior built from hidden-probe failure, duplicate-cluster membership, provenance/timing anomalies, declared-cost outliers, and privacy declarations that reduce observability without improving validation utility. The scarcity score \(s_i\) is not a flat entitlement: it is positive only when a domain or artifact type is underrepresented on the contract validation card and the client improves that slice. The privacy term \(\epsilon_i^{priv}\) is the measured DP spend or declared privacy-constrained evaluation budget. The coefficient \(\gamma\) is only the unit conversion from that spend to a payment penalty; it is not itself a privacy budget. The prototype sweeps \(\gamma\) for Laplace and zCDP-style Gaussian releases to test payment stability under different operator cost assumptions.

Cost \(c_i\) should be verified when it affects payment. The deployment interface accepts signed evaluator receipts: wall-clock time, accelerator type, token count, retrieval calls, adapter-train steps, DP releases, and enclave or worker measurement. If receipts are unavailable, the prototype treats cost as adversarially declared and applies reserve checks plus a cost-gaming stress test. In that test, a client jointly inflates cost by \(1\times\)--\(10\times\) and submits a duplicate; the attacker is not selected under the current contract because higher declared cost lowers value density and duplicate/risk penalties dominate.

\subsection{Manipulation Defenses}

Sybil splitting and duplicate submission are handled by redundancy clusters and group-level marginal discounts. Poisoned adapters are penalized through hidden safety probes and manipulation-risk scores. Privacy-budget gaming increases uncertainty and therefore lowers the lower confidence bound. Cost inflation is limited by cost penalties and reserve checks. Non-IID specialist suppression is addressed through stratified evaluation and scarcity bonuses.

\subsection{Theoretical Properties}
We state the mechanism's guarantees informally here and defer formal statements and proofs to the supplementary material. \valuator{} admits a Hoeffding-style estimation-error bound; under a value-gap margin condition, budgeted selection recovers the downstream-oracle set with high probability; budget feasibility and payment monotonicity hold by construction; and approximate incentive-compatibility, sybil, poisoning, and individual-rationality bounds quantify when common manipulations are unprofitable.

\section{\bench{}}
\label{sec:bench}

\subsection{Tracks}

\bench{} contains four tracks:

\begin{itemize}
  \item Federated retrieval corpus market: clients contribute private passages for RAG.
  \item Federated adapter market: clients contribute LoRA or adapter-effect artifacts.
  \item Prompt and demonstration market: clients contribute prompt templates or in-context examples.
  \item Preference and safety market: clients contribute pairwise labels and red-team cases.
\end{itemize}

\subsection{Client Types}

The benchmark includes high-quality specialists, redundant generalists, noisy contributors, sybil splitters, duplicate submitters, poisoned adapter clients, privacy-sensitive clients, cost inflaters, and rare-domain contributors.

\subsection{Metrics}

Valuation metrics include rank correlation with expensive leave-one-client-out, top-\(k\) high-value precision, harmful-client AUROC, duplicate discount ratio, value sign stability, and confidence interval coverage. Market metrics include utility per dollar, budget violation rate, individual rationality rate, diversity of selected domains, and regret against oracle selection. Strategic metrics include sybil gain ratio, duplicate overpayment, poisoning profit, prompt-gaming gain, privacy-gaming gain, and cost-inflation gain.

\subsection{Submission and Leaderboard Protocol}

To make \bench{} more than a one-off reproduction script, a benchmark submission consists of a signed client manifest, typed artifact descriptors, optional sketches, and a score file. Each manifest records \texttt{round\_id}, \texttt{client\_id}, \texttt{artifact\_type}, \texttt{base\_model}, \texttt{provenance\_hash}, \texttt{privacy\_budget}, \texttt{declared\_cost}, \texttt{endpoint\_ref}, and \texttt{signature}. Raw private payloads are not submitted to the leaderboard. Held-out evaluation uses a contract card whose validation-slice hashes and audit policy are committed before submissions; test labels, hidden probes, and rare-slice membership are not exposed until the round closes. A leaderboard row reports per-track raw utility, risk-adjusted utility, strategic clients selected, rare clients retained, audit calls, model calls, wall-clock time, budget use, and dispute count. \bench{} includes JSON schemas and toy clients for specialist, duplicate, sybil, poison, privacy-gaming, and cost-inflation submissions so outside researchers can test new market mechanisms against the same interface. Figure~\ref{fig:benchflow} shows the resulting submission flow.

\textbf{A runnable, extensible leaderboard.} To lower the barrier to entry, \bench{} ships as a small library and a one-command harness. An external method is a \emph{scorer} that returns a contribution score per client, given a committed contract card and a validation-utility oracle that exposes the same secure-evaluation interface every baseline uses. The harness selects a budget-feasible coalition from the returned scores, serves it on a disjoint held-out test card, and appends a standardized row; a JSON submission schema and the frozen contract-card hash make results comparable and commit-reveal-checkable. Table~\ref{tab:leaderboard} shows the leaderboard pre-populated with the seven baselines on a frozen 100-client instance (contract \texttt{72a9f6c9}). Beyond task quality it reports operator economics at a fixed budget: \method{} attains the best held-out accuracy and utility-per-dollar while retaining the rare specialist and selecting no strategic clients, whereas volume and leave-one-out spend the same budget on high-volume poison (10--12 strategic clients) and collapse. Because selection and serving are decoupled by the contract card, a new valuation method is scored against the identical instance in a single call.

\begin{table}[t]
\centering
\caption{\bench{} leaderboard on a frozen 100-client instance: held-out accuracy and operator economics at a fixed budget.}
\label{tab:leaderboard}
\scriptsize
\setlength{\tabcolsep}{3pt}
\begin{tabular}{lrrrrr}
\toprule
\rowcolor{headrow} Method & Acc. & Util/\$ & Rare kept & Strat.\ sel. & Calls \\
\midrule
\rowcolor{keyrow} \method{} & \textbf{0.550} & \textbf{0.126} & 1.00 & 0.00 & 351 \\
Equal & 0.517 & 0.116 & 1.00 & 0.00 & 0 \\
Shapley-UCB~\cite{zhang2024shapleyucb} & 0.517 & 0.117 & 1.00 & 0.33 & 351 \\
Retrieval similarity~\cite{lu2024datameasurements} & 0.442 & 0.100 & 0.67 & 0.33 & 0 \\
Leave-one-out~\cite{koh2017influence} & 0.108 & 0.025 & 0.33 & 10.67 & 117 \\
Volume & 0.092 & 0.020 & 0.00 & 12.00 & 0 \\
FL-Shapley~\cite{ghorbani2019datashapley} & 0.075 & 0.017 & 0.00 & 4.67 & 351 \\
\bottomrule
\end{tabular}
\end{table}

% \begin{figure}[t]
% \centering
% \resizebox{\linewidth}{!}{%
% \begin{tikzpicture}[
%   benchstep/.style={draw, rounded corners=1pt, align=center, font=\scriptsize, minimum width=1.65cm, minimum height=0.58cm},
%   store/.style={draw, align=center, font=\scriptsize, minimum width=1.65cm, minimum height=0.58cm},
%   arrow/.style={-{Latex[length=1.4mm]}, thick},
%   audit/.style={-{Latex[length=1.4mm]}, dashed, thick}
% ]
% \node[benchstep] (card) {contract\\hash};
% \node[benchstep, right=0.55cm of card] (manifest) {client\\manifest};
% \node[store, right=0.55cm of manifest] (eval) {held-out\\evaluator};
% \node[store, right=0.55cm of eval] (board) {leaderboard\\record};
% \node[benchstep, below=0.45cm of eval] (dispute) {dispute\\replay};
% \draw[arrow] (card)--(manifest);
% \draw[arrow] (manifest)--node[above,font=\tiny]{schema check} (eval);
% \draw[arrow] (eval)--node[above,font=\tiny]{metrics} (board);
% \draw[audit] (board)--(dispute);
% \draw[audit] (dispute)--(card);
% \end{tikzpicture}
% }
% \caption{\bench{} submission flow. External methods submit signed manifests against a pre-committed contract hash; held-out evaluation produces comparable leaderboard records and dispute replay follows the same evidence path.}
% \label{fig:benchflow}
% \end{figure}

\begin{figure*}
  \centering
  \includegraphics[width=0.85\linewidth]{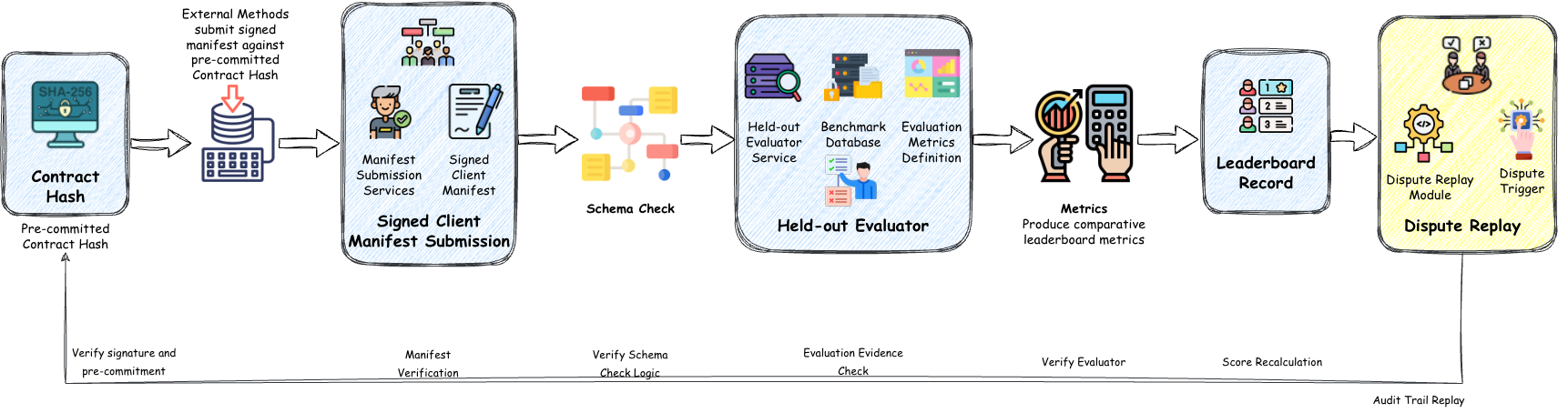}
  \caption{\bench{} submission flow.}
  \label{fig:benchflow}
\end{figure*}

\section{Real-Data Validation}
\label{sec:eval}

\subsection{Tracks and Baselines}

We evaluate on three real-data tracks. The first is a FEVER retrieval-corpus market: clients own private shards of evidence passages and are evaluated by whether a coalition retrieves the gold evidence for held-out claims. The second is a de-circularized FEVER serving track: each method selects a paid coalition using a contract-validation card, then the selected coalition is actually served on a separate held-out test card through retrieve--read scoring. The headline metrics are downstream accuracy, macro-F1, evidence exact match, and regret against a downstream oracle; none of these metrics use attack labels or the payment formula. The third is a trained low-rank adapter validation: each AG News client trains a small PyTorch low-rank adapter, and coalitions are evaluated by summing adapter logits. These tracks exercise real text, real labels, strategic clients, and the major requested baselines.

We also add one optional generator-backed RAG serving track. It uses \texttt{google/flan-t5-small} as a real seq2seq reader after coalition selection: the market selects clients on a validation card, retrieves held-out evidence from the paid coalition, and prompts the generator to answer the FEVER claim as yes/no/unknown, which is mapped back to SUPPORTS/REFUTES/NOT ENOUGH INFO. The attack is deliberately potent: a high-volume poison corpus is useful on the validation card but injects wrong-answer instructions into held-out evidence. This is a realistic prompt-injection threat for RAG systems; \method{} does not use attack-specific rules for it, only the same generic risk, duplicate, uncertainty, and audit scoring used elsewhere. This track is not a large-generator benchmark, but it tests whether the market advantage survives a real generated answer rather than only retrieval scoring.

We compare equal payment, volume, leave-one-out, sampled FL-Shapley, retrieval similarity, Shapley-UCB, and \method{} where applicable. Each rule ranks clients, then selects a budget-feasible subset. Utility includes task performance minus cost, privacy, redundancy, and risk terms. Risk-adjusted utility additionally penalizes selected strategic clients and rewards retained rare specialists, so we report it as an operator composite rather than an independent accuracy metric. Tables therefore keep raw utility, downstream task metrics, strategic selections, rare retention, AUC, and audit/call counts separate wherever possible; the alpha-sweep stress test shows when the composite ordering flips.

\begin{table}[t]
\centering
\caption{Generator-backed held-out FEVER RAG using \texttt{google/flan-t5-small}. Values are mean \(\pm\) 95\% CI.}
\label{tab:generator-backed-rag}
\scriptsize
\resizebox{\linewidth}{!}{%
\begin{tabular}{lcccc}
\toprule
\rowcolor{headrow} Method & Acc. & Macro-F1 & Poison & Strategic \\
\midrule
Volume & \(0.3187{\pm}0.0531\) & \(0.2362{\pm}0.0428\) & 1.0 & 1.0 \\
Leave-one-out & \(0.3125{\pm}0.0490\) & \(0.2269{\pm}0.0411\) & 1.0 & 3.6 \\
FL-Shapley & \(0.3125{\pm}0.0624\) & \(0.2102{\pm}0.0629\) & 0.8 & 2.6 \\
\rowcolor{keyrow} \method{} & \(\mathbf{0.3937{\pm}0.0219}\) & \(\mathbf{0.2718{\pm}0.0198}\) & \textbf{0.0} & \textbf{0.0} \\
\bottomrule
\end{tabular}
}
\end{table}

\begin{table*}[t]
\centering
\caption{Real-data market validation. R-utility is risk-adjusted utility. Spearman is rank correlation with leave-one-out values. AUC measures harmful-client ranking.}
\label{tab:realmarket}
\scriptsize
\begin{tabular}{llccrrr}
\toprule
\rowcolor{headrow} Track & Method & Utility & R-utility & Spearman & AUC & Strategic \\
\midrule
FEVER RAG & Equal & \(0.0219{\pm}0.0308\) & \(-0.1881{\pm}0.0308\) & -0.5874 & 0.5000 & 2.0 \\
FEVER RAG & Volume & \(-0.0386{\pm}0.0308\) & \(-0.2485{\pm}0.0308\) & -0.0489 & 0.7500 & 2.0 \\
FEVER RAG & Leave-one-out & \(0.0096{\pm}0.0308\) & \(-0.2004{\pm}0.0308\) & \textbf{1.0000} & 1.0000 & 2.0 \\
FEVER RAG & FL-Shapley & \(0.0096{\pm}0.0308\) & \(-0.2004{\pm}0.0308\) & 0.8881 & 1.0000 & 2.0 \\
FEVER RAG & Retrieval similarity & \(0.0158{\pm}0.0393\) & \(-0.1942{\pm}0.0393\) & 0.4405 & 0.7031 & 2.0 \\
FEVER RAG & Shapley-UCB & \(0.0219{\pm}0.0308\) & \(-0.1881{\pm}0.0308\) & 0.5944 & 0.6719 & 2.0 \\
\rowcolor{keyrow} FEVER RAG & \method{} & \(\mathbf{0.1321{\pm}0.0308}\) & \(\mathbf{0.1621{\pm}0.0308}\) & 0.8112 & 1.0000 & \textbf{0.0} \\
\midrule
Trained LoRA &Equal & \(0.2284{\pm}0.0441\) & \(0.2284{\pm}0.0441\) & -0.1715 & 0.5000 & 0.0 \\
Trained LoRA &Volume & \(0.2284{\pm}0.0441\) & \(0.2284{\pm}0.0441\) & -0.1715 & 0.2500 & 0.0 \\
Trained LoRA &Leave-one-out & \(0.2284{\pm}0.0441\) & \(0.2284{\pm}0.0441\) & \textbf{1.0000} & 0.9166 & 0.0 \\
\rowcolor{keyrow} Trained LoRA &\method{} & \(\mathbf{0.2555{\pm}0.0063}\) & \(\mathbf{0.2856{\pm}0.0063}\) & 0.8418 & \textbf{1.0000} & \textbf{0.0} \\
\bottomrule
\end{tabular}
\end{table*}

\subsection{De-Circularized Held-Out Serving}

Table~\ref{tab:generator-backed-rag} is the primary serving experiment, and it de-circularizes evaluation by construction: after selecting coalitions on a validation card, the paid coalition is served on a disjoint test card with a real seq2seq reader. Under a high-volume held-out prompt-injection poisoner, \method{} improves downstream accuracy by 7.5--8.1 percentage points over Volume, leave-one-out, and FL-Shapley, with paired 95\% CIs that remain positive (\(0.0812{\pm}0.0477\) vs FL-Shapley, \(0.0813{\pm}0.0372\) vs leave-one-out, and \(0.0750{\pm}0.0372\) vs volume). The weaker retrieve--read proxy, moved to Appendix Table~S21, has the same direction of effect but is near the three-class chance floor: \method{} has the highest mean accuracy and macro-F1, but several CIs overlap.

\subsection{Market Utility, Ordered Credit, and Adapter Evidence}

The FEVER RAG market illustrates the failure \method{} targets: Equal, volume, retrieval-similarity, leave-one-out, FL-Shapley, and Shapley-UCB all select on average two strategic clients under the configured budget, while \method{} selects none and reaches the honest-client utility reference (harmful-client AUC is only a constructed-attack diagnostic; the robustness case rests on held-out serving, strategic counts, and ablations). The second load-bearing result is that ordered \valuator{} changes rank and selected-set evidence enough to be a contract choice, and the controlled study of Table~S15 shows the change is serving-neutral---a payment-fairness choice, not an accuracy lever. On the trained low-rank LoRA validation, \method{} obtains the best utility and risk-adjusted utility while retaining the rare adapter and selecting no strategic clients.

\subsection{De-Circularized Serving at Scale}
\label{sec:scale-heldout}

To establish serving quality at scale independently of the market's own objective, we run a larger, fully de-circularized end-to-end test. We build FEVER retrieval markets at 50, 100, and 200 clients, each with a rare specialist and roughly 12\% strategic clients: a high-volume poison corpus that is validation-useful but injects flipped-label copies of the held-out gold evidence, plus duplicate and sybil identities. Every baseline selects a budget-feasible coalition on a validation card; the selected coalition is then served on a \emph{disjoint} held-out test card and scored only on downstream task accuracy, macro-F1, and evidence exact match---metrics that use neither attack labels nor the risk-adjusted-utility formula. Table~\ref{tab:scale-heldout} reports held-out accuracy. Volume and leave-one-out buy the high-volume poison (6, 12, and 24 poison clients at the three scales) and collapse to $0.07$--$0.13$ accuracy; sampled FL-Shapley is similarly harmed. \method{} selects zero strategic clients at every scale and is the best or statistically tied-for-best method---its interval overlaps Equal at 50 clients and Shapley-UCB at 100, and it is highest at 200---while beating the strongest \emph{standard valuation} baseline (Volume, leave-one-out, or FL-Shapley) by $+0.26$, $+0.44$, and $+0.50$ as the market grows. Because the score is a held-out task metric, this is a non-circular demonstration that \method{}'s risk-aware selection converts strategic-robustness signals into serving quality, and that the margin widens with market size. Two comparisons place the margin in context: Shapley-UCB, which also discounts uncertain sellers, is the closest competitor, and simple Equal-weighting performs respectably because the cost model already embeds a risk term. The 200-client markets run in about $27$~s in the prototype.

\begin{table}[t]
\centering
\caption{De-circularized held-out serving at scale. Test-card accuracy (mean \(\pm\) 95\% CI) for coalitions selected on a separate validation card; no attack labels or risk-adjusted utility enter the score.}
\label{tab:scale-heldout}
\scriptsize
\begin{tabular}{lccc}
\toprule
\rowcolor{headrow} Method & 50 clients & 100 clients & 200 clients \\
\midrule
Equal & \(\mathbf{0.475{\pm}0.060}\) & \(0.500{\pm}0.074\) & \(0.517{\pm}0.188\) \\
Volume & \(0.070{\pm}0.018\) & \(0.090{\pm}0.033\) & \(0.100{\pm}0.075\) \\
Leave-one-out & \(0.085{\pm}0.040\) & \(0.100{\pm}0.041\) & \(0.133{\pm}0.091\) \\
FL-Shapley & \(0.205{\pm}0.185\) & \(0.095{\pm}0.042\) & \(0.108{\pm}0.071\) \\
Retrieval similarity & \(0.420{\pm}0.048\) & \(0.465{\pm}0.070\) & \(0.483{\pm}0.145\) \\
Shapley-UCB & \(0.455{\pm}0.059\) & \(\mathbf{0.545{\pm}0.059}\) & \(0.592{\pm}0.114\) \\
\rowcolor{keyrow} \method{} & \(0.465{\pm}0.091\) & \(0.540{\pm}0.045\) & \(\mathbf{0.633{\pm}0.114}\) \\
\bottomrule
\end{tabular}
\end{table}

\section{Discussion}
\label{sec:discussion}

\subsection{Value Drift}

Contribution values depend on the base model, prompt, retrieval stack, adapter routing, evaluation distribution, and safety rules, so the value index must support revaluation triggers: a client valuable for one base model may be redundant or harmful for another.

\subsection{Limitations and Scope}

\method{} occupies a specific operating point. FL-Shapley and Shapley-UCB can exceed it on raw utility because they optimize marginal validation utility directly, whereas \method{} prices contribution under duplicate, sybil, poison, privacy, and audit constraints, preserves scarce clients when the validation slice supports them, and produces a ledger that can be disputed and revalued. Its objective is therefore a practical balance between utility, robustness, participation fairness, and auditability rather than immediate accuracy alone. Two scope notes follow. The adapter track uses a trained, cached HuggingFace/PEFT LoRA path; full multi-adapter PEFT valuation is a scale-up experiment rather than the default. And because raw artifacts are privacy-constrained, \method{} provides approximate manipulation resistance---lowering expected gains from common attacks, exposing uncertainty, and making high-risk payments auditable---rather than a dominant-strategy truthfulness guarantee.

\subsection{Privacy Limits}

The framework reduces raw data exposure but does not eliminate privacy risk. The prototype provides bounded DP aggregate releases with composition accounting and third-party-verifiable (Ed25519) signed evidence whose schema maps field-for-field to hardware TEE attestation documents (Table~S13); payment values are protected through secure or attested evaluation and ledger access control rather than per-value DP. Table~\ref{tab:securityprimitives} separates the mandatory primitives (signed contract cards, payload hashes, bounded DP releases, hidden-probe separation, immutable logs, per-round accounting) from those needed for stronger confidentiality (secure aggregation and hardware-backed attestation), and states what fails without each. Without hardware attestation or secure aggregation, the prototype is a reproducible evidence path, not an end-to-end privacy proof.

\begin{table}[t]
\centering
\caption{Security and privacy primitives.}
\label{tab:securityprimitives}
\scriptsize
\begin{tabular}{@{}p{0.23\linewidth}p{0.18\linewidth}p{0.39\linewidth}@{}}
\toprule
\rowcolor{headrow} Primitive & Status & What fails if missing \\
\midrule
Signed contract card & Mandatory & Rare slices, weights, and probes can be retrofitted. \\
Payload hashes & Mandatory & Duplicate and dispute evidence is not replayable. \\
Bounded DP aggregate releases & Mandatory for public aggregates & Published statistics can leak sensitive values. \\
Per-round privacy accounting & Mandatory & \((\epsilon,\delta)\) spend cannot be audited. \\
Hidden audit probe separation & Mandatory & Strategic clients can overfit public probes. \\
Secure aggregation & Mandatory for private update sketches & Operator can inspect individual updates. \\
Hardware TEE attestation & Mandatory for raw-payload evaluation & Signing key has no hardware root of trust (evidence stays third-party verifiable in software). \\
Watermark checks & Optional defense & Adaptive mimicry becomes easier. \\
\bottomrule
\end{tabular}
\end{table}

\section{Conclusion}
\label{sec:conclusion}

This paper introduced \method{}, an auditable, risk-adjusted data-market framework for federated foundation-model adaptation. The core idea is to value and pay for heterogeneous foundation-model adaptation artifacts under privacy, non-IID heterogeneity, strategic behavior, and system constraints. \method{} combines a contribution registry, secure evaluation sandbox, unordered or pipeline-ordered \valuator{} valuation engine, uncertainty-aware payment rule, market ledger, and dispute auditor. Experiments on FEVER retrieval, held-out RAG serving, low-rank PyTorch adapters, and cached HF/PEFT LoRA validation show that \method{} preserves held-out task quality under adversarial participation---improving downstream accuracy by 7.5--8.1 points over standard valuation baselines while selecting zero strategic clients---and provides a concrete market layer for ordered, risk-adjusted, rare-client-preserving, and audit-ready FedFM adaptation contribution valuation.

\section*{Data and Code Availability}
The source code, benchmark harness (\bench{}), datasets, and scripts required to reproduce all experiments, tables, and figures in this paper are available at \url{https://anonymous.4open.science/r/FedMark-FM-3A89}.

% Generated by IEEEtran.bst, version: 1.14 (2015/08/26)

\clearpage
\setcounter{section}{0}
\setcounter{table}{0}
\setcounter{figure}{0}
\setcounter{equation}{0}
\renewcommand{\thesection}{S\arabic{section}}
\renewcommand{\thetable}{S\arabic{table}}
\renewcommand{\thefigure}{S\arabic{figure}}
\renewcommand{\theequation}{S\arabic{equation}}

\section*{Supplementary Material}
This supplement contains the appendices referenced from the main paper; its sections, tables, figures, and equations are numbered with an ``S'' prefix.

\section{Deferred Proofs and Additional Results}
\subsection{Theoretical Properties}

We establish the following properties under assumptions that define their scope. Utilities are bounded in \([a,b]\). The surrogate marginal prediction error is at most \(\epsilon_{\mathrm{sur}}\) in expectation. Privacy-preserving sketches introduce at most \(\epsilon_{\mathrm{priv}}\) utility distortion. Client metadata and provenance commitments are verifiable and cannot be forged. Duplicate clustering identifies a same-source split with recall at least \(r_{\mathrm{dup}}\). A hidden audit detects a poisoned contribution with probability \(q\) and applies audit penalty \(\kappa_{\mathrm{aud}}\). A strategic deviation can increase apparent value by at most \(g_i\) before market penalties. For sybil analysis, \(\epsilon_{\mathrm{split}}\) is the maximum extra apparent value created by splitting a fixed portfolio, and \(\delta\) is the duplicate penalty applied to each detected extra identity in the same redundancy cluster.

Using the notation above, final payments under budget \(B\) are
\begin{equation}
p_i =
\begin{cases}
\widetilde{p}_i, & \sum_j \widetilde{p}_j \leq B,\\
B\widetilde{p}_i / \sum_j \widetilde{p}_j, & \sum_j \widetilde{p}_j > B.
\end{cases}
\end{equation}
Client utility is \(u_i=p_i-c_i^{act}-\ell_i\), where \(c_i^{act}\) is actual participation cost and \(\ell_i\) is privacy or manipulation loss.

\begin{proposition}[\valuator{} estimation error]
Let \(\bar{\Delta}_i\) be the empirical mean of \(m_i\) bounded direct marginal evaluations and let \(\widehat{\phi}_i\) replace some direct calls with a surrogate and DP/sketch releases. If utilities lie in \([a,b]\), the surrogate error is at most \(\epsilon_{\mathrm{sur}}\) with probability \(1-\delta_{\mathrm{sur}}\), and privacy/sketch distortion is at most \(\epsilon_{\mathrm{priv}}\) with probability \(1-\delta_{\mathrm{priv}}\), then with probability at least \(1-\delta-\delta_{\mathrm{sur}}-\delta_{\mathrm{priv}}\),
\[
|\widehat{\phi}_i-\phi_i|
\leq
(b-a)\sqrt{\frac{\log(2/\delta)}{2m_i}}
\;+\;\epsilon_{\mathrm{sur}}+\epsilon_{\mathrm{priv}} .
\]
\end{proposition}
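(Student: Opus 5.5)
The plan is a three-term triangle inequality followed by a union bound. Let \(\bar{\Delta}_i=\frac1{m_i}\sum_{t=1}^{m_i}\Delta_i(C_t)\) be the direct-only estimator built from the \(m_i\) direct marginal evaluations, where each coalition \(C_t\) is drawn from the permutation distribution defining \(\phi_i\). This is uniform random orders for Eq.~\eqref{eq:clientvalue}, or \(\Omega_{\mathcal{G}}\) for Eq.~\eqref{eq:orderedvalue}. Let \(\widetilde{\phi}_i\) be the idealized estimator obtained by running the same combination rule with exact (non-sketched, non-DP) features. We then split
\[
|\widehat{\phi}_i-\phi_i|\le|\bar{\Delta}_i-\phi_i|+|\widetilde{\phi}_i-\bar{\Delta}_i|+|\widehat{\phi}_i-\widetilde{\phi}_i|,
\]
and bound the three terms by the sampling, surrogate, and privacy hypotheses respectively.

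\textbf{Sampling term.} We make the implicit assumption explicit: the \(m_i\) direct draws are i.i.d. and unbiased for \(\phi_i\). This is the main obstacle, because Algorithm~\ref{alg:s3val} samples adaptively (stratum weights and the audit predicate \(A(i,C,H)\) depend on history). We handle it in two ways. First, we condition on the realized stratum allocation and use importance weights so that each stratum's sample mean is unbiased. Alternatively, we note that the centered partial sums form a martingale with increments bounded in an interval of length \(b-a\), so Azuma--Hoeffding gives the same constant. In either case the summands lie in an interval of width \(b-a\); strictly, a marginal of \(U\in[a,b]\) lies in \([a-b,b-a]\). We state the bound under the convention that \(b-a\) denotes the marginal range, and otherwise absorb a factor of \(2\). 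Hoeffding then gives
\[
\Pr\bigl(|\bar{\Delta}_i-\phi_i|>t\bigr)\le 2\exp\!\left(-\frac{2m_it^2}{(b-a)^2}\right),
\]
and setting the right-hand side to \(\delta\) yields \(t=(b-a)\sqrt{\log(2/\delta)/(2m_i)}\).

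\textbf{Surrogate term.} Interpret the hypothesis "the surrogate error is at most \(\epsilon_{\mathrm{sur}}\)" as a uniform bound \(|g_\theta(C,i)-\Delta_i(C)|\le\epsilon_{\mathrm{sur}}\) on the event \(E_{\mathrm{sur}}\), with \(\Pr(E_{\mathrm{sur}})\ge1-\delta_{\mathrm{sur}}\). Since \(\widetilde{\phi}_i\) is a convex combination of direct values and surrogate predictions for the replaced calls, each replaced term moves by at most \(\epsilon_{\mathrm{sur}}\). Hence the average moves by at most \(\epsilon_{\mathrm{sur}}\); convexity is what prevents the error from scaling with the number of replaced calls.

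\textbf{Privacy term.} The same convexity argument on the event \(E_{\mathrm{priv}}\) bounds the sketch/DP distortion by \(\epsilon_{\mathrm{priv}}\).

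\textbf{Conclusion.} A union bound over the Hoeffding failure event, \(E_{\mathrm{sur}}^c\), and \(E_{\mathrm{priv}}^c\) gives the stated probability \(1-\delta-\delta_{\mathrm{sur}}-\delta_{\mathrm{priv}}\). Because the argument uses only the sampling distribution of coalitions, it applies verbatim to both the unordered and the pipeline-ordered variants.
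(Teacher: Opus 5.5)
Your proposal is correct and is essentially the paper's own argument. It applies Hoeffding to the empirical mean of the true sampled marginals, peels off the surrogate and privacy/sketch perturbations with a triangle inequality, and takes a union bound over the three failure events; your convexity remark supplies the justification for the additive \(\epsilon_{\mathrm{sur}}\) and \(\epsilon_{\mathrm{priv}}\) terms that the paper states without elaboration. Your flags that a marginal of \(U\in[a,b]\) only lies in an interval of width \(2(b-a)\), and that the sampler is adaptive, are assumptions the paper leaves implicit; however, the importance-weighting remedy does not keep the \((b-a)\sqrt{\log(2/\delta)/(2m_i)}\) constant, because the weights rescale the summands' range unless the stratum allocation is proportional, so the Azuma route needs coalitions drawn conditionally from the defining permutation law with \(m_i\) fixed in advance, and otherwise your explicit i.i.d.\ unbiased-draw assumption is what actually carries the bound.
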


\begin{IEEEproof}
Let \(\widetilde{\phi}_i\) be the empirical mean of true sampled marginals. Hoeffding's inequality gives the first term for \(|\widetilde{\phi}_i-\phi_i|\). The estimator differs from \(\widetilde{\phi}_i\) only through surrogate replacement and privacy/sketch perturbation. A union bound over the three events and the triangle inequality yield the additive decomposition.
\end{IEEEproof}

We calibrate the LCB intervals empirically. We use split conformal calibration: sampled marginal estimates are computed on a proper valuation split, absolute residuals are computed on a calibration split, and the lower predictive bound for a test client is \(\widehat{\phi}_i-\widehat{q}_{1-\alpha}\), where \(\widehat{q}_{1-\alpha}\) is the finite-sample conformal residual quantile. Under exchangeability of calibration and test residuals within a contract stratum, this gives marginal \(1-\alpha\) lower-bound coverage for the target value used by that stratum. Exchangeability is only approximate under non-IID coalitions, so we report coverage rather than assuming it. On small FEVER submarkets with at most ten clients, we compute exact Shapley values exhaustively; naive sampled intervals cover 0.3333 of exact values, while split conformal intervals cover 1.0 with mean width 0.0141. On larger markets where exact Shapley is infeasible, the same procedure is reported against leave-one-client-out and sampled-Shapley surrogates as a diagnostic, not as a formal proof.

\begin{theorem}[Selection correctness under value gaps]
Fix a budget \(B\), a deterministic budgeted selector \(A(\cdot)\), and a downstream test utility \(U_{\mathrm{test}}\) that is independent of attack labels and payment penalties. Let \(S^\star=A(\phi)\) denote the budget-feasible downstream-oracle set under true client values \(\phi_i\), and let \(\widehat{S}=A(\widehat{\phi})\) be the paid set obtained from estimated values. Suppose \(S^\star\) has margin
\[
\Delta_A =
\min_{i\in S^\star,j\notin S^\star}
\left(\frac{\phi_i}{c_i}-\frac{\phi_j}{c_j}\right) > 0
\]
with respect to the value-density ordering used by \(A\), and costs satisfy \(c_i\leq c_{\max}\). If every client has
\[
m_i \geq
\frac{2(b-a)^2}{(\Delta_A c_i/2-\epsilon_{\mathrm{sur}}-\epsilon_{\mathrm{priv}})^2}
\log\frac{2n}{\delta}
\]
direct-equivalent audits and \(\Delta_A c_i/2>\epsilon_{\mathrm{sur}}+\epsilon_{\mathrm{priv}}\), then
\[
\Pr[\widehat{S}\neq S^\star]\leq\delta .
\]
Consequently, the operator regret against the downstream oracle satisfies
\[
U_{\mathrm{test}}(S^\star)-U_{\mathrm{test}}(\widehat{S})
= 0
\]
on the same high-probability event, and is at most the range of \(U_{\mathrm{test}}\) otherwise.
\end{theorem}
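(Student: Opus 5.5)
The plan is to reduce set recovery to a uniform per-client accuracy event and then apply the margin. Let \(\mathcal{E}\) be the event that \(|\widehat{\phi}_i-\phi_i|<\Delta_A c_i/2\) holds for every client \(i\in N\) simultaneously. We will show \(\Pr[\mathcal{E}^c]\le\delta\). We will also show that on \(\mathcal{E}\) the selector returns \(\widehat{S}=S^\star\). The regret statement then follows immediately. On \(\mathcal{E}\) the two sets coincide, so \(U_{\mathrm{test}}(S^\star)-U_{\mathrm{test}}(\widehat{S})=0\). Off \(\mathcal{E}\) the difference of two values of \(U_{\mathrm{test}}\) is trivially at most its range. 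Since \(U_{\mathrm{test}}\) does not enter \(A\), no further property of it is needed.

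\textbf{Step 1 (uniform estimation).} Invoke the \valuator{} estimation-error proposition for each client \(i\) with confidence parameter \(\delta/n\) in place of \(\delta\). This gives, with probability at least \(1-\delta/n-\delta_{\mathrm{sur}}-\delta_{\mathrm{priv}}\),
\[
|\widehat{\phi}_i-\phi_i|\le (b-a)\sqrt{\tfrac{\log(2n/\delta)}{2m_i}}+\epsilon_{\mathrm{sur}}+\epsilon_{\mathrm{priv}}.
\]
Plugging in the hypothesized lower bound on \(m_i\), the Hoeffding term is at most \((\Delta_A c_i/2-\epsilon_{\mathrm{sur}}-\epsilon_{\mathrm{priv}})/2\). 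This is strictly below the slack \(\Delta_A c_i/2-\epsilon_{\mathrm{sur}}-\epsilon_{\mathrm{priv}}\), and that slack is positive by the second hypothesis. So the total error is below \(\Delta_A c_i/2\); the factor \(2\) in the numerator only helps.

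A union bound over the \(n\) Hoeffding events then gives probability \(\delta\) for their failure. The main obstacle is the surrogate and privacy failure probabilities \(\delta_{\mathrm{sur}},\delta_{\mathrm{priv}}\), which the theorem statement does not carry. I would handle this by reading the assumptions of the supplement ("surrogate error at most \(\epsilon_{\mathrm{sur}}\)", "distortion at most \(\epsilon_{\mathrm{priv}}\)") as holding deterministically on the relevant event. Equivalently, I would state that \(\delta\) absorbs \(\delta_{\mathrm{sur}}+\delta_{\mathrm{priv}}\), and note this explicitly. I would also note that the expectation-level surrogate assumption must be strengthened to a high-probability or uniform one for the bound to apply.

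\textbf{Step 2 (density order preserved).} On \(\mathcal{E}\), fix \(i\in S^\star\) and \(j\notin S^\star\). Then
\[
\frac{\widehat{\phi}_i}{c_i}-\frac{\widehat{\phi}_j}{c_j}>\Big(\frac{\phi_i}{c_i}-\frac{\phi_j}{c_j}\Big)-\frac{\Delta_A}{2}-\frac{\Delta_A}{2}\ge 0,
\]
because dividing the per-client error \(\Delta_A c_i/2\) by \(c_i\) gives exactly \(\Delta_A/2\) in density units. This is why the threshold scales with \(c_i\); the bound \(c_{\max}\) is only needed for a uniform sample-size corollary.

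So every member of \(S^\star\) still strictly dominates every nonmember in estimated value density. We then need that a deterministic selector depending only on the value-density ordering returns \(S^\star\) whenever the top block in the ordering is unchanged. I would take this as the meaning of "with respect to the value-density ordering used by \(A\)": \(A\) is a greedy density rule whose output depends on \(\widehat{\phi}\) only through which clients lie above the cut. This is the secondary delicate point. Permutations within \(S^\star\) could alter a greedy knapsack fill, so I would require, or argue, that the greedy output under the true ordering is exactly the prefix \(S^\star\) and that any reordering inside it selects the same feasible prefix. Given this, \(\widehat{S}=A(\widehat{\phi})=A(\phi)=S^\star\) on \(\mathcal{E}\), which completes the argument.
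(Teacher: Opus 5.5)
Your proposal is correct and follows the paper's proof step for step. Proposition~1 at confidence \(\delta/n\), combined with a union bound, gives uniform accuracy within \(\Delta_A c_i/2\); the margin then keeps every member of \(S^\star\) strictly above every non-member in estimated density, so the deterministic selector returns \(S^\star\), and the regret claim is immediate. Your caveats (absorbing \(\delta_{\mathrm{sur}},\delta_{\mathrm{priv}}\) into \(\delta\), and requiring \(A\) to depend on \(\widehat{\phi}\) only through the top block) make explicit what the paper's one-line proof assumes silently. One small correction to where you locate the fragility: reordering inside \(S^\star\) cannot change a greedy knapsack fill, because \(S^\star\) is feasible as a whole. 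What can break a stop-at-first-misfit greedy is a reordering among non-members, where a different, cheaper non-member now comes first and fits, and your top-block assumption is exactly what excludes this.
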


\begin{IEEEproof}
By Proposition~1 and a union bound over \(n\) clients, all value estimates are within \(\Delta_A c_i/2\) of their true values with probability at least \(1-\delta\). On that event, every selected client's value density remains above every unselected client's value density, so the deterministic selector returns \(S^\star\). The regret statement follows because the selected set is identical on the high-probability event; outside the event, regret is trivially bounded by the utility range.
\end{IEEEproof}

\noindent\textbf{Remark 1 (budget feasibility).}
Budget feasibility is an arithmetic invariant of the scaled payment rule. If \(\sum_i\widetilde{p}_i\leq B\), payments are unchanged; otherwise \(p_i=B\widetilde{p}_i/\sum_j\widetilde{p}_j\), so \(\sum_i p_i=B\).

\noindent\textbf{Remark 2 (payment monotonicity).}
For two clients with equal cost, privacy, risk, uncertainty, and scarcity terms, a larger \(LCB_i\) gives a weakly larger raw payment. Budget scaling multiplies all positive raw payments by the same nonnegative factor, preserving the order.

\begin{proposition}[Approximate incentive compatibility]
Suppose a deviation can increase client \(i\)'s apparent payment by at most \(g_i\), but triggers expected audit penalty \(q\kappa_{\mathrm{aud}}\), risk discount \(\eta r_i\), and manipulation loss \(\ell_i^{man}\). Honest reporting is \(\epsilon_i\)-dominant with
\[
\epsilon_i = \max(0, g_i-q\kappa_{\mathrm{aud}}-\eta r_i-\ell_i^{man}).
\]
In particular, the deviation is unprofitable when \(q\kappa_{\mathrm{aud}}+\eta r_i+\ell_i^{man}\geq g_i\).
\end{proposition}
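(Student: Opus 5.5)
The argument is a direct comparison of the honest report with an arbitrary deviation. Client utility is \(u_i=p_i-c_i^{act}-\ell_i\), so I write both sides in that form and bound their difference. Fix client \(i\) and the reports of all other clients. Let \(u_i^{hon}\) be the expected utility from honest reporting and \(u_i^{dev}\) the expected utility from any deviation in the bounded strategic model. The goal is the inequality \(u_i^{dev}-u_i^{hon}\leq\epsilon_i\), which is exactly \(\epsilon_i\)-dominance. Since the other reports were arbitrary, the bound then holds uniformly over opponents' profiles.

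\textbf{Step 1 (gain).} By hypothesis the deviation raises the apparent payment by at most \(g_i\) before penalties. So the pre-penalty payment under deviation is at most the honest payment plus \(g_i\).

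Budget scaling from Remark~1 must be checked so it cannot amplify this gain. Scaling multiplies every positive raw payment by a common factor \(\min(1,B/\sum_j\widetilde{p}_j)\le 1\). I would argue that \(g_i\) is measured on the post-scaling payment, or equivalently that scaling is a contraction for the deviator's own share. Either reading caps the payment gain by \(g_i\).

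\textbf{Step 2 (penalties).} I subtract the three deterrents that the deviation triggers and honest reporting does not.

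First, the hidden audit fires with probability \(q\) and imposes \(\kappa_{\mathrm{aud}}\). Its expected cost is \(q\kappa_{\mathrm{aud}}\), by linearity of expectation over the audit coin \(\xi_i\) in the audit predicate.

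Second, the risk term in Eq.~\eqref{eq:payment} contributes \(-\eta\max(d_i,r_i)\le-\eta r_i\) to the raw score. This term is absent, or weakly smaller, under honest reporting, so it lowers the deviating payment by at least \(\eta r_i\).

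Third, there is the manipulation loss \(\ell_i^{man}\), which enters \(u_i\) through \(\ell_i\).

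Cost \(c_i^{act}\) is the same on both sides, or weakly larger under deviation, so it cancels favorably.

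\textbf{Step 3 (combine).} Combining the steps gives \(u_i^{dev}-u_i^{hon}\le g_i-q\kappa_{\mathrm{aud}}-\eta r_i-\ell_i^{man}\). Honest reporting always yields a difference of \(0\), so the deviation's advantage is at most \(\max(0,\cdot)=\epsilon_i\). The ``in particular'' clause is then immediate: when the penalties sum to at least \(g_i\), we get \(\epsilon_i=0\), so the deviation is unprofitable.

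\textbf{Main obstacle.} The delicate point is the interaction of the penalties with the positive part \([\cdot]_+\) and with budget scaling. If the deviating raw score is already clipped at \(0\), the penalty \(\eta r_i\) is not fully realized in payment. However, in that case the deviator's payment is \(0\le\) the honest payment, so the gain is nonpositive anyway. I would handle this by splitting into two cases, according to whether the deviating score is clipped. In the unclipped case, scaling by a factor at most \(1\) only shrinks gains, which I would check by treating \(g_i\) as a bound on the final payment difference.
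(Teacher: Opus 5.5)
Your proposal is correct and follows the paper's own argument: the paper's proof just observes that the deviation's utility gain is at most the apparent payment gain \(g_i\) minus the expected audit penalty \(q\kappa_{\mathrm{aud}}\), the risk discount \(\eta r_i\), and the manipulation loss \(\ell_i^{man}\), and then takes the positive part. Your case split on clipping and budget scaling goes beyond what the paper does, but you should rest on the hypotheses as stated, namely that \(g_i\) bounds the gain in the final scaled payment \(p_i\) and that \(\eta r_i\) is the increment the deviation triggers; your alternative justifications do not by themselves give these bounds, because the scaling factor can rise when your deviation lowers other clients' raw payments (so scaling need not be a ``contraction'' for your share), and an honest report may already carry a nonzero risk term (so ``weakly smaller'' does not yield a full \(\eta r_i\) difference).
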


\begin{IEEEproof}
The deviation's utility gain is at most its apparent payment gain minus expected audit penalty, risk discount, and manipulation loss. Taking the positive part gives \(\epsilon_i\).
\end{IEEEproof}

\begin{proposition}[Sybil splitting unprofitability]
If a portfolio is split into \(k\) same-cluster identities and the aggregate split value can exceed the honest unsplit value by at most \(\epsilon_{\mathrm{split}}\), then
\[
Gain_{\mathrm{sybil}} \leq \epsilon_{\mathrm{split}} - r_{\mathrm{dup}}(k-1)\delta .
\]
Sybil splitting is unprofitable whenever \(r_{\mathrm{dup}}(k-1)\delta \geq \epsilon_{\mathrm{split}}\).
\end{proposition}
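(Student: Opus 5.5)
We prove the bound by direct accounting of the attacker's payment, in the same style as the approximate incentive-compatibility proposition above. Let \(V_0\) be the apparent value the honest, unsplit portfolio would earn. Let \(V_1,\dots,V_k\) be the apparent values of the \(k\) same-cluster identities after splitting. The sybil gain is the aggregate split payment minus the honest payment. By the defining assumption on \(\epsilon_{\mathrm{split}}\), the pre-penalty gain satisfies
\[
\sum_{j=1}^k V_j - V_0 \le \epsilon_{\mathrm{split}} .
\]
It therefore suffices to show that market penalties remove at least \(r_{\mathrm{dup}}(k-1)\delta\) in expectation from the split side and nothing extra from the honest side.

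The key step is to count the duplicate penalties. The redundancy clustering of \valuator{} places same-source identities in one cluster, and the duplicate penalty \(\delta\) is charged to each \emph{detected extra} identity. Exactly \(k-1\) identities are extra relative to the honest single identity. Each such identity is flagged with probability at least \(r_{\mathrm{dup}}\), by the recall assumption on duplicate clustering. Define indicators \(D_j\) for \(j=2,\dots,k\) marking whether identity \(j\) is detected. By linearity of expectation, the total expected penalty is
\[
\delta\sum_j \Pr[D_j=1] \ge r_{\mathrm{dup}}(k-1)\delta .
\]
This step needs no independence across identities. The honest unsplit portfolio is a single identity in its own cluster, so it incurs no duplicate penalty. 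Subtracting the split-side penalty from the pre-penalty gain gives
\[
Gain_{\mathrm{sybil}} \le \epsilon_{\mathrm{split}} - r_{\mathrm{dup}}(k-1)\delta ,
\]
with the bound read in expectation. The unprofitability clause follows immediately: the right-hand side is nonpositive exactly when \(r_{\mathrm{dup}}(k-1)\delta\ge\epsilon_{\mathrm{split}}\).

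The main obstacle is connecting these penalties to the actual payment rule in Eq.~\eqref{eq:payment}. That rule subtracts \(\eta\max(d_i,r_i)\) inside a positive part and then applies budget scaling. Neither operation is a plain additive \(\delta\) per identity.

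To handle the positive part, I would interpret \(\delta\) as the effective per-identity deduction, namely \(\eta\) times the duplicate-score increment, and argue that clipping at zero can only reduce the split side's total. Clipping at zero does raise an individual payment above \(\widetilde{p}_i\) minus the penalty. However, I would absorb that excess into \(\epsilon_{\mathrm{split}}\), which is defined as a bound on the apparent value after these effects.

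To handle budget scaling, I would note that it multiplies all payments by a common factor in \([0,1]\). If the total stays within budget, the factor is \(1\) and the bound is unchanged. If scaling binds, I would check the two cases: the gain is scaled by the same factor when positive, and a negative gain remains an upper bound. One further assumption must be made explicit: with \(k\) identities, the honest comparator faces the same scaling factor. I would state this caveat, or else restrict the analysis to the unscaled regime, which Remark~1 shows is the default.
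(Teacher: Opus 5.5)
Your core accounting (pre-penalty gain at most \(\epsilon_{\mathrm{split}}\), minus an expected duplicate penalty of at least \(r_{\mathrm{dup}}(k-1)\delta\) over the \(k-1\) extra identities by linearity of expectation, with no penalty on the honest single identity) is exactly the paper's proof, which treats \(\delta\) as an additive per-identity deduction and never tries to reconcile it with Eq.~\eqref{eq:payment}. That reconciliation is therefore optional, but if you keep it, fix two slips: clipping at zero can only raise payments, so it shrinks the realized penalty, and absorbing the excess into \(\epsilon_{\mathrm{split}}\) turns it into a penalty-dependent quantity rather than the pre-penalty bound in the hypothesis; and when the right-hand side is negative, a common scaling factor in \([0,1]\) preserves unprofitability but not the quantitative bound.
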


\begin{IEEEproof}
The split can add at most \(\epsilon_{\mathrm{split}}\) apparent value. With probability or recall \(r_{\mathrm{dup}}\), each extra identity after the first receives duplicate penalty \(\delta\). Subtracting expected penalties gives the bound.
\end{IEEEproof}

\begin{proposition}[Audit-adjusted poisoning profitability]
A poisoned contribution with apparent gain \(g\), audit detection probability \(q\), audit penalty \(\kappa_{\mathrm{aud}}\), and risk score \(r_i\) has expected extra profit at most
\[
\mathbb{E}[profit_{\mathrm{poison}}] \leq g-q\kappa_{\mathrm{aud}}-\eta r_i.
\]
Poisoning is unprofitable whenever \(q\kappa_{\mathrm{aud}}+\eta r_i\geq g\).
\end{proposition}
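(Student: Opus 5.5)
The plan is to mirror the proof of the preceding approximate-incentive-compatibility proposition, specialized to a single poisoned contribution. We compare the client's payoff under the poisoned submission with its payoff under the honest counterfactual, holding the rest of the market fixed. By the definition of apparent gain, poisoning can raise the client's raw score, and hence its payment, by at most \(g\) relative to honest submission.

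\emph{Step 1 (payment gain).} The payment rule \(\widetilde{p}_i=[\cdot]_+\) of Eq.~\eqref{eq:payment} is 1-Lipschitz in the score. The budget-scaling factor of Remark~1 lies in \([0,1]\). Together these give \(p_i^{\mathrm{poison}}-p_i^{\mathrm{honest}}\le g\) before penalties.

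\emph{Step 2 (risk discount).} The hidden-probe component of the manipulation score \(r_i\) enters the contract score through the term \(-\eta\max(d_i,r_i)\ge -\eta r_i\)-type penalty. I would argue that, relative to honest play, the poisoned submission incurs an additional discount of at least \(\eta r_i\). This treats \(r_i\) as the risk increment attributable to the poison, which is the reading the statement's assumptions intend.

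\emph{Step 3 (audit penalty).} The hidden audit fires with probability \(q\) and then subtracts \(\kappa_{\mathrm{aud}}\), giving an expected subtraction of \(q\kappa_{\mathrm{aud}}\) by linearity of expectation.

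\emph{Step 4 (combine).} Summing the three bounds and taking expectation over the audit coin gives \(\mathbb{E}[profit_{\mathrm{poison}}]\le g-q\kappa_{\mathrm{aud}}-\eta r_i\). Any manipulation loss \(\ell_i\ge0\) in the client utility \(u_i=p_i-c_i^{act}-\ell_i\) only lowers this further, and the participation cost \(c_i^{act}\) cancels in the difference. The unprofitability condition then follows by reading off when the right-hand side is nonpositive.

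The main obstacle is the interaction of penalties with the positive part and budget scaling. The \([\cdot]_+\) truncation and the rescaling by \(B/\sum_j\widetilde{p}_j\) can absorb penalties rather than pass them through one-for-one. To handle this, I would interpret the stated bound as a bound on the pre-truncation, pre-scaling contract score difference, which is the apparent-gain accounting used in the approximate-IC proposition. I would also note that truncation and a scaling factor of at most one can only shrink a positive gain, so the upper bound is preserved. Finally, I would state the assumption that \(r_i\) measures the incremental risk triggered by the poison, so that Step 2's discount is genuinely additional to the honest baseline.
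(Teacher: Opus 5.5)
Your proposal is correct and is essentially the paper's own argument. The paper's proof simply sums the apparent gain \(g\), the expected audit penalty \(q\kappa_{\mathrm{aud}}\), and the deterministic risk discount \(\eta r_i\); your caveats (score-level accounting, \(r_i\) read as the poison-induced risk increment, the rest of the market held fixed) only make explicit what that three-line proof leaves implicit in its definitions.

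Two small fixes are needed. First, the Step~2 inequality runs the other way: \(-\eta\max(d_i,r_i)\le-\eta r_i\). Second, for actual truncated and scaled payments you only get \(\mathbb{E}[profit_{\mathrm{poison}}]\le[g-\eta r_i]_+-q\kappa_{\mathrm{aud}}\), and the displayed bound can fail when \(\eta r_i>g\), so your remark that ``the upper bound is preserved'' is too strong in that case. This weaker bound still implies the unprofitability condition, so your score-level reading is the right one for the displayed inequality.
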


\begin{IEEEproof}
The apparent gain is \(g\). The expected audit penalty is \(q\kappa_{\mathrm{aud}}\), and the deterministic market risk discount is \(\eta r_i\). Summing the terms yields the inequality.
\end{IEEEproof}

\begin{proposition}[Individual rationality]
For an honest client with \(\ell_i=0\), if \(p_i\geq c_i^{act}\), then participation is individually rational. A sufficient pre-scaling condition is
\[
LCB_i+\rho s_i \geq \beta c_i+\gamma \epsilon_i^{priv}+c_i^{act}.
\]
\end{proposition}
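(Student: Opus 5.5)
The plan is to split the statement into its two claims and handle each directly from the definitions. The first is that \(p_i\geq c_i^{act}\) implies individual rationality. The second is that the displayed pre-scaling inequality is sufficient for this to hold.

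For the first claim, I will use the definition of client utility, \(u_i=p_i-c_i^{act}-\ell_i\). An honest client has \(\ell_i=0\), so \(u_i=p_i-c_i^{act}\). The outside option of not participating yields zero payment and zero cost, hence zero utility. Therefore \(p_i\geq c_i^{act}\) is exactly \(u_i\geq 0\), which is individual rationality.

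For the sufficiency claim, I will read \(LCB_i\) as \(\widehat{\phi}_i-\lambda\sigma_i\), consistent with the lower-bound payment rule. For an honest client I will take the risk term \(\eta\max(d_i,r_i)\) to be zero, or else absorb it into \(LCB_i\); I will state this convention explicitly. The hypothesis then rearranges to
\[
\widehat{\phi}_i-\lambda\sigma_i-\beta c_i-\gamma\epsilon_i^{priv}+\rho s_i\;\geq\; c_i^{act}\;\geq\;0 .
\]
So the bracket in Eq.~\eqref{eq:payment} is nonnegative, the positive part is inactive, and \(\widetilde{p}_i\) equals the bracket, which is at least \(c_i^{act}\). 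When \(\sum_j\widetilde{p}_j\leq B\), Remark~1 gives \(p_i=\widetilde{p}_i\geq c_i^{act}\), and the first claim finishes the argument.

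The main obstacle is the scaled regime \(\sum_j\widetilde{p}_j>B\). There \(p_i=B\widetilde{p}_i/\sum_j\widetilde{p}_j\) can fall below \(c_i^{act}\), so the pre-scaling condition alone does not imply \(p_i\geq c_i^{act}\). I will handle this by making the qualifier ``pre-scaling'' explicit. The condition guarantees IR whenever the budget constraint does not bind. When it does bind, the correct sufficient condition is the scaled version \(B\widetilde{p}_i/\sum_j\widetilde{p}_j\geq c_i^{act}\), equivalently the displayed inequality with \(c_i^{act}\) multiplied by \(\sum_j\widetilde{p}_j/B\). I would state this as a one-line corollary rather than claim unconditional sufficiency. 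I also need to check that nonzero risk scores for honest clients are either ruled out by assumption or included on the right-hand side.
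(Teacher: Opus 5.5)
Your proposal is correct and follows the paper's own argument. The first claim is read off from \(u_i=p_i-c_i^{act}\) with \(\ell_i=0\), and the displayed condition is shown to make the raw payment \(\widetilde{p}_i\) at least \(c_i^{act}\) when the risk term is zero. The only difference is the binding-budget case: the paper notes that scaling preserves non-negativity and that IR is enforced by rejecting clients whose scaled payment falls below reserve cost, whereas you state that the pre-scaling condition yields IR only when \(\sum_j\widetilde{p}_j\leq B\) and give the scaled condition otherwise, which is a slightly sharper treatment of the same point.
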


\begin{IEEEproof}
The first statement follows from \(u_i=p_i-c_i^{act}\geq 0\). The sufficient condition makes the raw payment at least actual cost when risk is zero; budget scaling preserves non-negativity, and individual rationality can be enforced by rejecting clients whose scaled payment falls below reserve cost.
\end{IEEEproof}

\subsection{Extended Validation}

We also run a larger FEVER validation with 28 clients, including 24 honest clients and four strategic clients. Table~\ref{tab:extended} shows that FL-Shapley obtains the best raw utility in this larger setting, while \method{} is close in raw utility and does not force rare-client selection when the rare slice is not utility-improving. Consistent with its design, \method{} optimizes a different operating point from a pure accuracy maximizer: raw utility can favor FL-Shapley or Shapley-UCB, while \method{} targets risk-adjusted and audit-ready market value. Figure~\ref{fig:riskadjusted} plots the risk-adjusted utilities for the large FEVER RAG market and the trained LoRA coalition track, where \method{} reaches 0.65 versus 0.22--0.29 for the baselines.

Finally, we run actual HuggingFace/PEFT LoRA experiments using the pretrained public encoder \texttt{distilbert-base-uncased}, replacing the earlier random tiny test fixture. A single-adapter smoke run completes locally with 630,532 trainable LoRA parameters out of 67,587,080 total parameters (0.9329\%), takes 5.6732 seconds for training and 1.6443 seconds for inference, and reaches 0.4688 AG News accuracy on a 64-example smoke test. We also run a PEFT coalition market: six client adapters are trained separately, and coalitions are evaluated by averaging adapter logits. In this coalition track, \method{} selects no strategic adapters, retains the rare adapter, and reaches 0.6462 risk-adjusted utility versus 0.2200 for equal and volume selection and 0.2859 for leave-one-out. These results are not competitive classification benchmarks; they verify a real pretrained-transformer PEFT coalition path.

\begin{figure}[t]
\centering
\includegraphics[width=\linewidth]{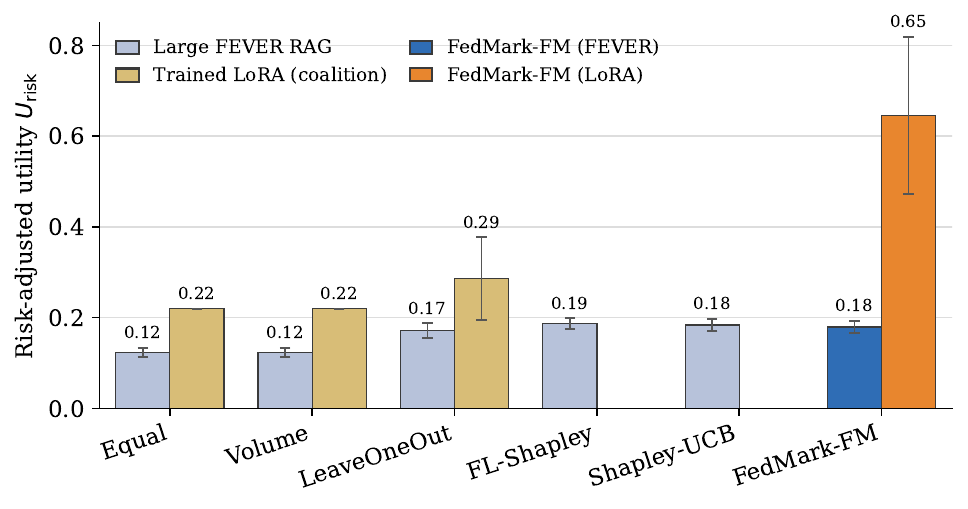}
\caption{Risk-adjusted utility \(U_{\mathrm{risk}}\) for the large FEVER RAG market and the trained LoRA coalition track; higher is better. \method{} is highlighted in each group.}
\label{fig:riskadjusted}
\end{figure}

\begin{table*}[t]
\centering
\caption{Extended validation. Large FEVER uses 28 clients. PEFT validates a trained, cached LoRA path.}
\label{tab:extended}
\scriptsize
\begin{tabular}{lccc}
\toprule
\rowcolor{headrow} Method/track & Utility & R-utility & AUC \\
\midrule
Large FEVER FL-Shapley & \(\mathbf{0.1869{\pm}0.0118}\) & \(\mathbf{0.1869{\pm}0.0118}\) & 1.0000 \\
Large FEVER Shapley-UCB & \(0.1837{\pm}0.0138\) & \(0.1837{\pm}0.0138\) & 0.6573 \\
\rowcolor{keyrow} Large FEVER \method{} & \(0.1792{\pm}0.0137\) & \(0.1792{\pm}0.0137\) & 1.0000 \\
Large FEVER Leave-one-out & \(0.1712{\pm}0.0163\) & \(0.1712{\pm}0.0163\) & 1.0000 \\
Large FEVER Volume & \(0.0938{\pm}0.0098\) & \(0.1238{\pm}0.0098\) & 0.7500 \\
\midrule
HF/PEFT LoRA smoke & 0.4587 & N/A & N/A \\
HF/PEFT LoRA coalition & \(0.6162{\pm}0.1732\) & \(\mathbf{0.6462{\pm}0.1732}\) & N/A \\
\bottomrule
\end{tabular}
\end{table*}

\subsection{Scalability}

Table~\ref{tab:scale} reports the original RAG valuation runtime as the number of clients and coalition samples increase. The script counts two utility calls per sampled marginal. To ground the data-systems claim in measurement rather than projection, Table~\ref{tab:measured-scale} adds measured FEVER markets at approximately 200, 500, and 1000 clients with a non-degenerate coverage utility. These larger rows use fewer coalition samples because they validate wall-clock scale and selection quality, not to replace the smaller high-fidelity RAG evaluator. Client counts include four strategic stress clients; Volume uses no coalition utility calls, while FL-Shapley and \method{} share sampled-marginal calls. The table uses a tight selection budget with method-specific runtime and call accounting, so Volume and FL-Shapley do not coincide. Under this cheap coverage utility, FL-Shapley is the strongest raw selector at scale, while \method{} remains above Volume but gives up raw coverage at 1000 clients because its risk/duplication discounts are not tuned for the proxy. A formal cost decomposition \(T(n,m,k,h)=T_{\mathrm{init}}(n)+m\!\cdot\!(2h\,T_{\mathrm{ev}}(n)+(1\!-\!h)\,T_{\mathrm{su}}+T_{\mathrm{fe}}+T_{\mathrm{up}})+L\,T_{\mathrm{dp}}+R\,T_{\mathrm{sg}}\), a coarse fit of \(T_{\mathrm{step}}(n)\!\approx\!2.0\!\times\!10^{-3}\,n^{1.5}\) seconds for \(n\geq16\), a per-component breakdown, and a direct \(T_{\mathrm{ev}}(n)\) microbenchmark are given in the appendix \valuator{} Computational Cost subsection (Tables~\ref{tab:complexity-fit}, \ref{tab:msweep}, and~\ref{tab:tev-microbench}). The analytic model serves as an explanatory fit checked against measurements rather than as primary evidence.

\begin{table}[t]
\centering
\caption{Scalability of sampled valuation on the FEVER RAG track.}
\label{tab:scale}
\begin{tabular}{rrrr}
\toprule
\rowcolor{headrow} Clients & Samples & Runtime (s) & Calls \\
\midrule
8 & 6 & 0.6495 & 12 \\
8 & 16 & 1.1540 & 32 \\
16 & 6 & 0.8888 & 12 \\
16 & 16 & 1.5915 & 32 \\
32 & 6 & 1.8874 & 12 \\
32 & 16 & 5.0247 & 32 \\
64 & 6 & 6.2499 & 12 \\
64 & 16 & 15.1439 & 32 \\
\bottomrule
\end{tabular}
\end{table}

\begin{table}[t]
\centering
\caption{Measured FEVER market scale. Selection quality is the selected coalition's coverage utility divided by a greedy downstream oracle; higher is better.}
\label{tab:measured-scale}
\scriptsize
\begin{tabular}{llrrrr}
\toprule
\rowcolor{headrow} Clients & Method & Calls & Runtime (s) & Quality & Gap \\
\midrule
204 & Volume & 0 & 0.0001 & 0.6658 & 0.1154 \\
204 & FL-Shapley & 816 & 0.0092 & \textbf{0.9260} & \textbf{0.0256} \\
\rowcolor{keyrow} 204 & \method{} & 816 & 0.0092 & 0.9056 & 0.0326 \\
\midrule
504 & Volume & 0 & 0.0002 & 0.5968 & 0.1876 \\
504 & FL-Shapley & 1008 & 0.0256 & \textbf{0.8290} & \textbf{0.0794} \\
\rowcolor{keyrow} 504 & \method{} & 1008 & 0.0256 & 0.8170 & 0.0851 \\
\midrule
1004 & Volume & 0 & 0.0003 & 0.4517 & 0.2611 \\
1004 & FL-Shapley & 2008 & 0.0819 & \textbf{0.6331} & \textbf{0.1748} \\
\rowcolor{keyrow} 1004 & \method{} & 2008 & 0.0819 & 0.5260 & 0.2257 \\
\bottomrule
\end{tabular}
\end{table}

\subsection{Attack Tests}

Table~\ref{tab:attack} reports stress tests on the real-data track. The values are attack indicators: for duplicate and poison, lower is better; for non-IID specialist, higher is better. \method{} passes all tested attacks on both tracks.

\begin{table}[t]
\centering
\caption{Attack summary for the real-data track.}
\label{tab:attack}
\begin{tabular}{llrr}
\toprule
\rowcolor{headrow} Track & Attack & Mean value & Pass rate \\
\midrule
FEVER RAG & Duplicate & 0.0 & 1.0 \\
FEVER RAG & Poison & 0.0 & 1.0 \\
FEVER RAG & Sybil & 0.0 & 1.0 \\
FEVER RAG & Non-IID specialist & 1.0 & 1.0 \\
\bottomrule
\end{tabular}
\end{table}

\subsection{Consolidated Robustness and Sensitivity Checks}

We add targeted checks for the most fragile parts of the market layer. Instead of presenting each auxiliary CSV as a separate appendix table, Fig.~\ref{fig:grouped-evidence} groups roughly fifteen checks into four panels, reported in two consolidated tables. Table~\ref{tab:reviewresponse} covers calibration and defenses: naive sampled-marginal intervals are overconfident while exact-Shapley split-conformal calibration reaches full coverage on small FEVER submarkets; no single surrogate family dominates every artifact type; the hard-paraphrase duplicate threshold gives zero false positives at recall 0.9575; and the incentive-compatibility bound is zero under full defenses but positive (attacker profit 0.035--0.075) under a deliberately weakened profile, so the empirical bound tracks defense settings. Table~\ref{tab:reviewsensitivity} covers sensitivity and system behavior: budget, privacy-cost, DP-\(\epsilon\), and policy-drift sweeps, per-type LCB fairness, wall-clock overhead, multi-round drift, and VCG, procurement, and ordered-valuation baselines.

% \begin{figure}[t]
% \centering
% \resizebox{\linewidth}{!}{%
% \begin{tikzpicture}[
%   panel/.style={draw, rounded corners=1pt, align=center, font=\scriptsize, minimum width=2.55cm, minimum height=1.05cm},
%   arrow/.style={-{Latex[length=1.3mm]}, thick}
% ]
% \node[panel] (a) {Calibration\\CI, exact Shapley\\surrogate audits};
% \node[panel, right=0.35cm of a] (b) {Defenses\\duplicates, collusion\\IC bounds};
% \node[panel, below=0.35cm of a] (c) {Sensitivity\\privacy, budget\\policy drift};
% \node[panel, right=0.35cm of c] (d) {System\\scale, overhead\\multi-round};
% \node[draw, rounded corners=1pt, align=center, font=\scriptsize, minimum width=5.0cm, minimum height=0.65cm] (claim) at ($(c.south)!0.5!(d.south)+(0,-0.55cm)$) {auditable risk-adjusted market evidence};
% \draw[arrow] (a.west) -- ++(-0.25,0) |- (claim.west);
% \draw[arrow] (b.east) -- ++(0.25,0) |- (claim.east);
% \draw[arrow] (c.south)--(claim.north west);
% \draw[arrow] (d.south)--(claim.north east);
% \end{tikzpicture}
% }
% \caption{Grouped robustness evidence. The detailed per-check rows are compressed into four evidence panels instead of many isolated appendix tables.}
% \label{fig:grouped-evidence}
% \end{figure}

\begin{figure}
  \centering
  \includegraphics[width=1\linewidth]{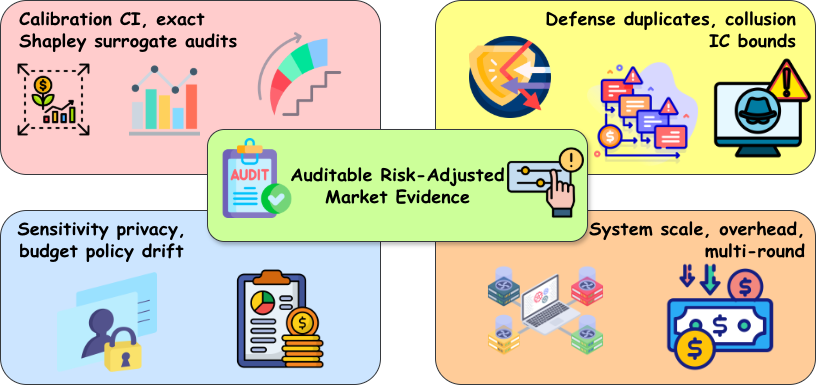}
  \caption{Consolidated robustness and sensitivity evidence, compressing roughly fifteen stress and sensitivity checks into four panels.}
  \label{fig:grouped-evidence}
\end{figure}

\begin{table*}[t]
\centering
\caption{Calibration, surrogate, and defense checks.}
\label{tab:reviewresponse}\label{tab:reviewdefense}
\scriptsize
\setlength{\tabcolsep}{2pt}
\begin{tabular}{@{}p{0.21\textwidth}p{0.23\textwidth}p{0.15\textwidth}p{0.14\textwidth}p{0.17\textwidth}@{}}
\toprule
\rowcolor{headrow} \multicolumn{5}{@{}l}{\emph{(a) Calibration and surrogates}} \\
\midrule
\rowcolor{keyyellow} Check & Setting & Error/coverage & Rank/utility & Reading \\
\midrule
CI calibration & FEVER, LOO diagnostic & Naive 0.2500; calibrated 0.8875 & Spearman 0.8629 & Raw intervals need calibration. \\
Exact-Shapley CI & FEVER \(n\leq10\) & Naive 0.3333; conformal 1.0000 & Width 0.0141 & LCB coverage tied to exact values. \\
Surrogate audit budget & FEVER 0\%\(\rightarrow\)100\% direct & MAE 0.0371\(\rightarrow\)0.0062 & Spearman 0.4971\(\rightarrow\)0.8349 & Direct audits matter near payments. \\
Surrogate family & FEVER RF, 50\% audit & MAE 0.0104 & Spearman 0.6415; R-util. 0.1447 & RF is strongest on retrieval. \\
Surrogate family & AG News prior, 50\% audit & MAE 0.0399 & Spearman 0.4549; R-util. 0.6792 & Class-volume prior is competitive. \\
Surrogate family & HF/PEFT smoke, 50\% audit & MAE 0.038--0.084 & R-util. 0.6462 vs 0.2200 & Learned surrogate beats volume prior. \\
Generator RAG & FLAN-T5 held-out reader & Acc. 0.3937 & Gap 0.075--0.081 & Real generator shows robust serving gap. \\
\midrule
\rowcolor{headrow} \multicolumn{5}{@{}l}{\emph{(b) Defenses}} \\
\midrule
\rowcolor{keyyellow} Check & Setting & Detection & Market outcome & Reading \\
\midrule
Semantic clustering & Threshold 0.55 & P/R/F1 1.0000/0.9975/0.9987 & FPR 0.0000 & Easy duplicates are caught. \\
Embedding duplicate & Combined threshold 0.50 & P/R/F1 1.0000/0.9889/0.9944 & FPR 0.0000 & Hash embeddings improve paraphrase recall. \\
Hard paraphrases & Threshold 0.65 & P/R 1.0000/0.9575 & FPR 0.0000 & Stricter threshold trades recall for precision. \\
Collusion audit & Cap 0.10 vs none & Colluders 0.0 vs 2.0 & R-util. \(-0.0176\) vs \(-0.1373\) & Group cap suppresses artificial complementarity. \\
Higher-order collusion & 3 clients, cap 0.08 & Colluders 0.0 & R-util. \(-0.0366\) & Three-client split is also suppressed. \\
Collusion flags & Threshold 0.25 & Precision 0.0747; recall 1.0000 & Legit-FP 0.0 & Flags trigger audit, not automatic penalty. \\
IC bounds & Full defenses & \(q=0.09\)--1.0 & \(\epsilon_{\mathrm{IC}}=0\); profit 0 & Implemented attacks unprofitable. \\
IC bounds & No cluster, \(\eta=0.10\) & \(q=0\), \(\kappa=0\) & \(\epsilon_{\mathrm{IC}}\), profit 0.035--0.075 & Bound has discriminating power. \\
Held-out attacks & Adaptive keyword & Poison 0.0; strategic 0.2 & Acc. 0.3857; regret 0.0514 & Served test metric is not circular. \\
Cost gaming & \(1\times\)--\(10\times\) cost plus duplicate & Attacker 0.0 & Strategic 0.0 & Cost inflation does not raise profit. \\
Contract-card swap & Weight/slice/probe swaps & Detected 1.0 & Payment preserved 1.0 & Commit-reveal catches post-hoc changes. \\
\bottomrule
\end{tabular}
\end{table*}

\begin{table*}[t]
\centering
\caption{Sensitivity, operating-point, system, and policy-stability checks.}
\label{tab:reviewsensitivity}\label{tab:reviewsystem}
\scriptsize
\setlength{\tabcolsep}{2pt}
\begin{tabular}{@{}p{0.20\textwidth}p{0.24\textwidth}p{0.15\textwidth}p{0.15\textwidth}p{0.17\textwidth}@{}}
\toprule
\rowcolor{headrow} \multicolumn{5}{@{}l}{\emph{(a) Sensitivity and operating point}} \\
\midrule
\rowcolor{keyyellow} Check & Setting & Utility/stability & Selection & Reading \\
\midrule
Contract weights & FEVER supports slice & R-util. 0.2463 & Strategic 0.0; rare 1.0 & Slice-aware cards matter. \\
Budget sweep & FEVER \(B=0.9\rightarrow2.2\) & Audit 0.95--0.93 & Selected 5--13; rare 0--0.67 & More budget changes retention. \\
Privacy-cost sweep & \(\gamma=0\rightarrow0.35\) & R-util. 0.1964--0.1965 & Strategic 0.0; rare 1.0 & Ordering stable in this grid. \\
DP epsilon & \(\epsilon=0.5\rightarrow16\) & Rank corr. \(-0.4637\)--\(-0.3176\) & 16 releases & Per-value DP noise harms payments. \\
Policy stability & RAG top1/top5 vs top3 & Spearman 0.8948/0.9594 & Strategic 0.0; rare 1.0 & Retrieval policy changes values. \\
Drift stability & Dense/slice policies & Spearman 0.7995/0.5614 & Strategic 0.0; rare 1.0 & Slice drift is quantitatively visible. \\
Adapter policies & Hierarchical/orthogonal proxies & R-util. 0.7169/0.6749 & Strategic 0.0; rare 1.0 & Adapter routing is a contract dimension. \\
LCB fairness & Adapter per-type LCB & R-util. 0.6863 & Unc. ratio 1.352 & Per-type calibration helps rare clients. \\
Alpha Pareto & FEVER \(\alpha_s,\alpha_r\) grid & Rank 4.33 if \(\alpha_r=0\); 1.67 if \(\alpha_r>0\) & Raw 0.1664 & Composite ranking is transparent. \\
\midrule
\rowcolor{headrow} \multicolumn{5}{@{}l}{\emph{(b) System and policy stability}} \\
\midrule
\rowcolor{keyyellow} Check & Setting & Runtime/calls & Outcome & Reading \\
\midrule
System overhead & 100 clients, 16 samples & 3200 calls; 3.4146s & 937.3 calls/s & Smoke-shard overhead is small. \\
Utility kernel & Fixed shard, 100 clients & 0.296 ms/call & 24 utility calls & Confirms fast 100-client row. \\
Utility kernel & Scale-sweep shard, 8 clients & 1.371 ms/call & 24 utility calls & Corpus growth explains slower fit. \\
50+ client scale & FEVER/AG News proxy & 324/200 calls & 1.50s/1.10s & Larger proxy run remains practical. \\
Measured scale & FEVER 204/504/1004 clients & 0--2008 calls & 0.0001--0.0819s & Real wall-clock rows confirm scale by measurement. \\
DP enclave overhead & 20 releases & 0.0045s & 0.226 ms/release & Software evidence overhead is tiny. \\
Multi-round & 6 rounds & Drift 0.0000--0.0306 & Strategic 0.0 & New rounds handle policy drift. \\
VCG toy & Observable 6-client market & VCG 0.0799; \method{} 0.0588 & VCG needs raw obs. & VCG wins when observability holds. \\
Procurement baseline & FEVER OPT-style & R-util. 0.2472 & Rare 0.0 vs \method{} rare 1.0 & Budget efficiency differs from auditability. \\
Ordered valuation & FEVER ADS-style & Spearman 0.7554 & Overlap 0.6667 & Pipeline order can change payments. \\
\bottomrule
\end{tabular}
\end{table*}

\subsection{Ablations}

Table~\ref{tab:ablation} ablates duplicate, risk, scarcity, and full market penalties. The largest degradation comes from removing risk penalties, which admits poisoned or sybil clients and sharply reduces risk-adjusted utility. Removing scarcity hurts rare-specialist selection in the adapter track. These ablations support the claim that the payment mechanism is not merely a Shapley scorer; the market-specific adjustments drive strategic robustness.

\begin{table}[t]
\centering
\caption{Ablation summary.}
\label{tab:ablation}
\scriptsize
\setlength{\tabcolsep}{4pt}
\begin{tabular}{llrr}
\toprule
\rowcolor{headrow} Track & Variant & R-utility & Strategic \\
\midrule
FEVER RAG & Full & \textbf{0.1843} & \textbf{0.0} \\
FEVER RAG & No risk penalty & -0.4086 & 3.0 \\
FEVER RAG & No market penalties & -0.3091 & 2.67 \\
\bottomrule
\end{tabular}
\end{table}

\begin{table*}[t]
\centering
\caption{Positioning \method{} against adjacent literature. Coverage in the middle columns is shown as Harvey balls (empty = none, quarter = limited/indirect, half = partial, full = yes/addressed).}
\label{tab:related-positioning}
\setlength{\tabcolsep}{2pt}
\scriptsize
\begin{tabular}{@{}p{0.17\textwidth}p{0.21\textwidth}>{\centering\arraybackslash}p{0.10\textwidth}>{\centering\arraybackslash}p{0.085\textwidth}>{\centering\arraybackslash}p{0.12\textwidth}p{0.28\textwidth}@{}}
\toprule
\rowcolor{headrow} Line of work & Contribution unit & FM artifacts & Incentives & Strategic robustness & Gap addressed by \method{} \\
\midrule
FL incentives~\cite{yang2023flshapleypareto,zhang2024shapleyucb} & Client data or update & \hb{0.25} & \hb{1} & \hb{0.5} & Does not model RAG, prompts, adapters, preference, and safety artifacts as traded products. \\
Shapley/data valuation~\cite{ghorbani2019datashapley,jia2019efficient,yoon2020dvrl} & Examples, sources, clients & \hb{0.25} & \hb{0.25} & \hb{0.25} & Estimates contribution, but usually lacks a payment ledger, audits, and attack-aware market controls. \\
Data markets and auctions~\cite{jiang2021flmarket,lu2024datameasurements} & Datasets or sellers & \hb{0} & \hb{1} & \hb{0.5} & Prices data access, but does not bind valuation to private FedFM evaluation pipelines. \\
Structure-aware Shapley~\cite{zheng2024ads,zheng2025ads,chi2024pcwinter} & Ordered data groups & \hb{0.5} & \hb{0} & \hb{0.25} & Respects pipeline order, but does not define FedFM registry, payment, and audit machinery. \\
FedFM systems~\cite{fan2025fedfmchallenges,zhang2025fmincentive} & Private model/data collaboration & \hb{1} & \hb{0} & \hb{0} & Identifies incentives as a challenge, but leaves the concrete market mechanism unspecified. \\
Secure/incentive FL systems~\cite{addison2024cfedrag,yan2026fweb3} & Updates or RAG context & \hb{0.5} & \hb{1} & \hb{0.5} & Provides confidentiality or settlement substrate, but not typed FM artifact markets. \\
FM adaptation valuation~\cite{ye2025documentvaluation,han2025ragauctions} & Prompts, RAG passages, adapters & \hb{1} & \hb{0} & \hb{0.25} & Values artifacts for model improvement without federated sellers, payments, and dispute evidence. \\
\rowcolor{keyrow} \method{} & Typed private artifacts & \hb{1} & \hb{1} & \hb{1} & Integrates scalable valuation, budgeted payments, risk controls, and auditable system state. \\
\bottomrule
\end{tabular}
\end{table*}

\section{Registry, Ledger, and Experiment Schema}

Table~\ref{tab:registry-schema} gives the minimal registry fields used by the prototype. A deployment can add organization-specific compliance fields, but these fields are sufficient to reproduce valuation, payment, and dispute checks.

\begin{table}[t]
\centering
\caption{Contribution registry schema.}
\label{tab:registry-schema}
\scriptsize
\begin{tabular}{@{}p{0.33\linewidth}p{0.53\linewidth}@{}}
\toprule
\rowcolor{headrow} Field & Meaning \\
\midrule
client\_id & Stable seller identity or verified organization handle \\
artifact\_id & Unique artifact identifier bound to provenance hashes \\
artifact\_type & Retrieval, adapter, prompt, demonstration, preference, safety, or update sketch \\
domain\_tags & Declared and inferred task/domain strata for sampling \\
provenance\_hash & Signed hash or commitment for duplicate and dispute checks \\
privacy\_budget & Declared evaluation budget or privacy policy handle \\
declared\_cost & Claimed compute, curation, labeling, or serving cost \\
risk\_flags & Poison, duplicate, sybil, policy, or uncertainty indicators \\
endpoint\_ref & Federated retrieval, adapter, local evaluator, or certificate endpoint \\
license\_policy & Use constraints, retention policy, and payment eligibility \\
\bottomrule
\end{tabular}
\end{table}

The ledger stores the same round id, client id, estimated value, uncertainty, LCB, penalties, scarcity bonus, raw/scaled payment, evidence hashes, and dispute state so a payment can be replayed after model or policy changes. Table~\ref{tab:experiment-configs} summarizes the configuration of each experiment track.

\begin{table*}[t]
\centering
\caption{Experiment configuration card.}
\label{tab:experiment-configs}
\setlength{\tabcolsep}{2pt}
\scriptsize
\begin{tabular}{@{}p{0.17\textwidth}p{0.09\textwidth}p{0.12\textwidth}p{0.24\textwidth}p{0.29\textwidth}@{}}
\toprule
\rowcolor{headrow} Track & Seeds & Clients & Artifacts & Utility and baselines \\
\midrule
FEVER RAG & 3 & 18 & Claim evidence passages with duplicate, rare, and poison clients & Retrieval accuracy minus risk; equal, volume, leave-one-out, FL-Shapley, retrieval similarity, Shapley-UCB, \method{} \\
Trained low-rank adapter & 3 & 12 & Lightweight PyTorch low-rank adapters & Held-out classification accuracy and robustness checks \\
Large FEVER RAG & 10 & 28 & More clients and strategic variants & Multiseed RAG utility, risk-adjusted utility, AUC, and scalability counters \\
HF/PEFT LoRA smoke & 1 & 1 & Cached pretrained DistilBERT LoRA path & Training seconds, inference seconds, trainable parameters, total parameters, and LoRA percentage \\
HF/PEFT LoRA coalition & 2 & 4 & Real PEFT client adapters and coalition evaluations & Coalition utility with equal, volume, leave-one-out, and \method{} payment selections \\
\bottomrule
\end{tabular}
\end{table*}

\section{Notation and Contract Card}

Table~\ref{tab:notation} collects the notation used across valuation, payment, and theory. This table is intended to make the mechanism auditable: every payment field in the ledger maps to a symbol in the paper.

\begin{table}[t]
\centering
\caption{Notation summary.}
\label{tab:notation}
\scriptsize
\begin{tabular}{@{}p{0.21\linewidth}p{0.66\linewidth}@{}}
\toprule
\rowcolor{headrow} Symbol & Meaning \\
\midrule
\(N\) & Set of participating clients \\
\(\Pi_i,\Pi(C)\) & Client \(i\)'s private portfolio and coalition portfolio union \\
\(C\) & Coalition of clients evaluated by the sandbox \\
\(U(C)\) & Contract utility of coalition \(C\) \\
\(\phi_i\) & Ideal Shapley-style value of client \(i\) \\
\(\widehat{\phi}_i\) & Estimated client value returned by \valuator{} \\
\(\sigma_i\) & Estimation uncertainty for client \(i\) \\
\(LCB_i\) & Lower confidence bound \(\widehat{\phi}_i-\lambda\sigma_i\) \\
\(\lambda\) & Uncertainty discount; default value \(0.75\) \\
\(\beta\) & Cost-penalty coefficient; default value \(0.28\) \\
\(\gamma\) & Conversion from privacy/evaluation spend to payment penalty; default value \(0.20\) \\
\(\eta\) & Manipulation-risk penalty; default value \(0.75\) \\
\(\rho\) & Scarcity bonus weight; default value \(0.25\) \\
\(d_i,r_i\) & Duplicate risk and manipulation risk \\
\(\Phi(C)\) & Privacy-budget consumption or leakage term in utility \\
\(\kappa_{\mathrm{aud}}\) & Audit penalty applied after verified manipulation \\
\(c_i,\epsilon_i^{priv},s_i\) & Verified cost, privacy budget, and scarcity score \\
\(\widetilde{p}_i,p_i\) & Raw and budget-scaled payments \\
\(B\) & Market budget for the round \\
\bottomrule
\end{tabular}
\end{table}

A market round is defined by an immutable contract card containing the round id, base model, allowed artifact types, utility weights, validation slices, budget, valuation budget, payment parameters, audit policy, and release policy.

\section{Metric Definitions}

For a selected client set \(S\), the reported utility is the same contract utility used for market selection. Risk-adjusted utility adds a market-level robustness adjustment:
\begin{align*}
U_{\mathrm{risk}}(S) ={}& U(S)
-\alpha_{\mathrm{strat}} |S\cap S_{\mathrm{strategic}}|\\
&+\alpha_{\mathrm{rare}}\mathbf{1}\{S\cap S_{\mathrm{rare}}\neq\emptyset\}.
\end{align*}
The coefficients are fixed by the experiment card. Harmful-client AUC treats poisoned, sybil, and duplicate clients as positives and ranks clients by risk score. Duplicate discount is the ratio by which a duplicate client's payment is reduced relative to its unsuppressed score. Utility ratio divides a method's utility by the oracle subset utility under the same budget. Strategic selected is the number of selected clients with attack labels. Rare selected indicates whether at least one rare-domain specialist is retained.

\section{Stress-Test Protocol}

The stress tests instantiate sybil splitting, duplicate submission, poisoned contribution, privacy gaming, cost inflation, non-IID specialists, and collusion. Each test records selected strategic clients, attack profit, duplicate discount, rare retention, or reserve violation. The goal is executable measurement of common market failures, not immunity to arbitrary adaptive attacks.

\section{Contribution Interface Details}

The registry separates private payloads from market-visible metadata. The payload may remain at the client, inside a secure enclave, or behind a federated endpoint. The market-visible portion must be sufficient to schedule evaluations, detect obvious duplicates, compute privacy and cost adjustments, and bind later disputes to the same artifact. Table~\ref{tab:artifact-interface} details the required, optional, and private fields for each artifact type.

\begin{table*}[t]
\centering
\caption{Artifact-specific contribution interface. Required fields are visible to the market; private fields remain behind the endpoint or are exposed only through secure evaluation.}
\label{tab:artifact-interface}
\scriptsize
\setlength{\tabcolsep}{2pt}
\begin{tabular}{@{}p{0.14\textwidth}p{0.24\textwidth}p{0.27\textwidth}p{0.25\textwidth}@{}}
\toprule
\rowcolor{headrow} Artifact & Required market-visible fields & Optional sketches/certificates & Private payload \\
\midrule
Retrieval corpus & Domain tags, language, passage count, provenance hash, endpoint reference, license policy & MinHash, embedding centroids, BM25 term sketch, coverage histogram, duplicate hashes & Passage text, document metadata, private access logs \\
LoRA adapter & Base-model id, rank, target modules, parameter count, adapter hash, calibration split id & Delta norms, routing hints, public calibration trace, safety certificate & Adapter weights when served behind a private loader \\
Prompt template & Task tags, token length, prompt hash, policy constraints & Robustness trace, paraphrase sensitivity, public/private score gap & Full prompt if proprietary \\
Demonstrations & Task tags, count, format, label-space description, source hash & Embedding centroid, class/domain histogram, annotator reliability & Demonstration text or labels \\
Preference data & Pair count, preference dimensions, annotator policy, source hash & Disagreement rate, category histogram, reward-model certificate & Pairwise preference records \\
Safety data & Risk taxonomy, policy version, source hash, severity mix & Hidden-probe certificate, jailbreak family tags, red-team provenance & Full safety prompts and expected policies \\
Update sketch & Round id, model version, secure aggregation handle, update norm range & Clipped norm certificate, DP parameter certificate, validation delta & Raw gradient/update \\
\bottomrule
\end{tabular}
\end{table*}

The market operator should reject artifacts whose base-model id, license, endpoint policy, or provenance commitment is inconsistent with the round contract. This is a data-quality and governance check before any mechanism-design step is applied.

\subsection{Secure Evaluation Interfaces}

For retrieval corpora, the sandbox creates a temporary view or queries signed client endpoints, then computes hit rate, redundancy, latency, and risk without receiving the full corpus. For adapters, the contract fixes the base model, target modules, composition policy, and validation slices; our implementation includes a cached \texttt{distilbert-base-uncased} PEFT path. Prompts, demonstrations, preference data, and safety data are evaluated by contract probes, paraphrase variants, reward/preference probes, or hidden policy probes; public outputs are aggregate scores and evidence hashes unless a dispute requires controlled disclosure.

\section{Differential Privacy and Enclave Implementation}

Our implementation provides a concrete privacy layer. The DP module supports bounded mean releases with clipping, Laplace noise for pure DP, Gaussian noise for approximate DP, post-processing to the declared range, and a composition accountant. Gaussian releases are accounted with zCDP and can be related to RDP-style accounting \cite{bun2016concentrated,mironov2017renyi}:
\[
\rho = \sum_t \frac{\Delta_t^2}{2\sigma_t^2},
\quad
\epsilon(\delta)=\rho+2\sqrt{\rho\log(1/\delta)}.
\]
Laplace releases compose through sequential epsilon addition. Each release records \(n\), sensitivity, mechanism, noise scale, per-release privacy spend, and cumulative privacy spend.

The local secure-enclave backend enforces a policy before releasing any statistic. The policy specifies allowed functions, maximum epsilon, maximum delta, whether DP is required, and whether raw values may be released. The default policy forbids raw-value release. Every enclave output includes a manifest measurement, policy hash, input hash, output hash, cumulative DP spend, backend type, hardware-attestation flag, and a signature. The reproducible backend signs evidence with an Ed25519 key, so any auditor, buyer, or client can verify evidence with the public key alone; an HMAC, by contrast, can be forged by any party able to check it and is therefore not third-party verifiable. The genuine evidence verifies while any tamper to the released value or the freshness nonce invalidates the signature. The evidence schema mirrors the field layout of real TEE attestation documents (Table~\ref{tab:attestation-map}), so a hardware backend is swapped in by replacing only the signer and its certificate chain, leaving the market ledger unchanged. The one property the software backend cannot provide is a hardware root of trust for the signing key: on real hardware the \texttt{certificate}/\texttt{cabundle} fields carry a manufacturer-endorsed chain (AWS Nitro, Intel SGX/DCAP, AMD SEV-SNP), whereas the software backend self-signs them and reports \texttt{hardware\_attested=false}.

\begin{table}[t]
\centering
\caption{Attestation-evidence fields map to real TEE attestation documents; only the root of trust for the signing key differs in the software backend.}
\label{tab:attestation-map}
\scriptsize
\setlength{\tabcolsep}{3pt}
\begin{tabular}{@{}llll@{}}
\toprule
\rowcolor{headrow} \method{} evidence & AWS Nitro & Intel SGX/DCAP & AMD SEV-SNP \\
\midrule
\texttt{measurement} & PCR0 & MRENCLAVE & MEASUREMENT \\
\texttt{code\_hash} & PCR8 & MRSIGNER & ID\_KEY\_DIGEST \\
\texttt{public\_key} & \texttt{public\_key} & REPORTDATA & REPORT\_DATA \\
\texttt{nonce} & \texttt{nonce} & REPORTDATA & REPORT\_DATA \\
\texttt{output\_hash} & \texttt{user\_data} & REPORTDATA & REPORT\_DATA \\
\texttt{certificate} & \texttt{certificate} & PCK cert & VCEK cert \\
\texttt{signature} & COSE Sign1 & ECDSA quote & VCEK sig \\
\bottomrule
\end{tabular}
\end{table}

The prototype releases two DP aggregate statistics inside the software enclave: mean contribution quality and mean privacy budget. The privacy, enclave, and attestation components exercise bounded input, DP release, budget check, output hash, Ed25519-signed evidence, third-party verification, and ledger storage, writing signed evidence and an attestation document to disk. Table~\ref{tab:dpaccounting} lists the DP accounting settings used in these checks.

\begin{table}[t]
\centering
\caption{DP accounting settings used in our checks.}
\label{tab:dpaccounting}
\scriptsize
\setlength{\tabcolsep}{2pt}
\begin{tabular}{@{}p{0.19\linewidth}p{0.20\linewidth}p{0.22\linewidth}p{0.12\linewidth}p{0.12\linewidth}@{}}
\toprule
\rowcolor{headrow} Check & Mechanism & Per rel. \((\epsilon,\delta)\) & Clip & Releases \\
\midrule
Enclave mean & Gauss./Lap. & \((0.5,10^{-6})\) & \([0,1]\) & 1--20 \\
Value sweep & Laplace & \((0.5\text{--}16,10^{-6})\) & \([-1,1]\) & 16 \\
Privacy grid & Lap./zCDP proxy & \(\gamma\) grid & track-specific & 16--50 \\
\bottomrule
\end{tabular}
\end{table}

\enlargethispage{2\baselineskip}
The DP epsilon sweep shows the expected privacy-utility tension. With 16 bounded value releases on FEVER, strong noise at \(\epsilon=0.5\) sharply degrades rank correlation, and even \(\epsilon=16\) remains noisy in this tiny per-client release setting. We therefore do not recommend privatizing each individual value release independently in production: composition over many per-client, per-round value releases forces aggressive noise, which can invert payment ranks. The intended use is DP aggregate reporting plus secure or attested evaluation for payment-critical values. Figure~\ref{fig:dptradeoff} visualizes this privacy-utility tradeoff. Thus payments themselves are not claimed to be differentially private in the current prototype; their confidentiality comes from the secure-evaluation interface and evidence policy.

\section{\valuator{} Implementation and Sensitivity Notes}

\valuator{} samples marginal pairs \((C,i)\) by artifact type, validation slice, redundancy cluster, and risk stratum. The surrogate features include coalition size, artifact mix, slice coverage gaps, redundancy with \(C\), privacy/cost fields, drift, risk, scarcity, and a direct-evaluation indicator; retrieval and adapter tracks add sketch and adapter-fingerprint features. Direct audits are prioritized when uncertainty is high, a payment is near zero or the budget boundary, or a risk flag is present. The evaluation runner reports the detailed CSV rows; the compact reading is that direct audits reduce FEVER MAE from 0.0371 at 0\% audit to 0.0062 at 100\%, and that no single surrogate dominates every artifact type.

Budget, privacy, and policy sweeps are treated as calibration evidence rather than universal defaults. Increasing FEVER budget from 0.9 to 2.2 grows selected clients from 5 to 13 while preserving zero strategic selections in the tested grid. The privacy-cost sweep over \(\gamma\in\{0,0.05,0.10,0.20,0.35\}\) changes payment scores but not strategic selection in the current setup because risk and duplicate terms dominate. Operators should publish the Pareto frontier over raw utility, strategic selections, rare retention, audit rate, and dispute-candidate rate before settling a market round. Figure~\ref{fig:alphapareto} shows how the \method{} rank on FEVER depends on the operator-side risk-adjusted utility weights: the rank improves when rare retention is valued and weakens when \(\alpha_{\mathrm{rare}}=0\), making the operating-point tradeoff visible rather than implicit.

\textbf{Choosing the payment coefficients.} The weights \((\lambda,\beta,\gamma,\eta,\rho)\) are operator policy rather than data-fitted parameters, and we recommend selecting them from the Pareto frontier above: fix the uncertainty discount \(\lambda\) from the desired LCB coverage, set the cost and privacy conversions \(\beta,\gamma\) from verified unit prices, and choose the risk and scarcity weights \(\eta,\rho\) to hit a target strategic-selection and rare-retention rate. Selection is robust to the remaining slack because, as the sweeps show, the risk and duplicate terms dominate the ordering once \(\eta\) exceeds a small threshold, and the privacy-cost sweep leaves strategic selection unchanged. Duplicate and manipulation scores are audit \emph{triggers} backed by provenance commitments, not guarantees: token/trigram and embedding overlaps catch easy and paraphrased copies but can be evaded by strong adaptive paraphrase, so a deployment should add provenance attestation and watermark checks for defense-in-depth, and update-sketch tracks can plug a gradient-diversity sybil defense \cite{fung2018foolsgold} in behind the same evaluator interface.

\begin{figure}[t]
\centering
\includegraphics[width=\linewidth]{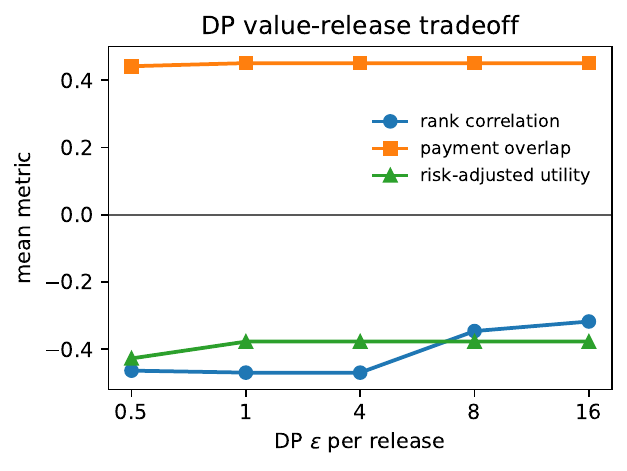}
\caption{DP value-release tradeoff on FEVER.}
\label{fig:dptradeoff}
\end{figure}

\begin{figure}[t]
\centering
\includegraphics[width=\linewidth]{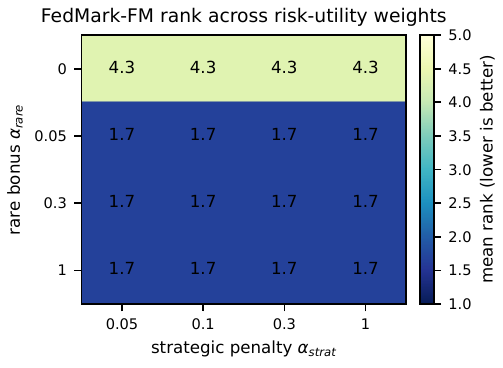}
\caption{Sensitivity of \method{}'s rank to the operator-side risk-adjusted utility weights on FEVER.}
\label{fig:alphapareto}
\end{figure}

Scarcity bonuses require pre-committed validation slices; post-hoc seller-proposed micro-slices are dispute evidence, not automatic bonuses. Duplicate defenses combine provenance hashes, lexical/embedding similarity, adapter delta fingerprints, target-module metadata, and hidden probes. The token/trigram tests are reproducible low-cost checks, not a complete adaptive-paraphrase defense.

\subsection{Ordered Pipeline Valuation}
\label{sec:ordered-appendix}

\valuator{} can use unordered or ADS-style ordered sampling. Ordered sampling restricts permutations to retrieval \(\rightarrow\) prompt/demonstration \(\rightarrow\) adapter \(\rightarrow\) preference/safety groups, preserving marginal-credit accounting while matching serving precedence. On FEVER, ordered S3Val has Spearman 0.7554 and selected-set overlap 0.6667 versus unordered S3Val. The contract card stores the group order so disputed payments can be replayed.

\textbf{Sensitivity to order misspecification.} Because credit depends on the declared group order, we test how much a wrong order matters. Within a group the value is permutation-invariant by within-layer symmetry, so only the cross-group precedence is consequential. Reversing the retrieval\(\prec\)reader precedence on the real-FEVER pipeline redistributes credit substantially (correct-versus-reversed Spearman $0.57$) and modestly lowers held-out serving accuracy (\(\Delta=-0.04\pm0.04\)), confirming that the serving order is a consequential contract parameter rather than a free choice. The operator therefore commits the order in the contract card and disputes replay payments under the recorded order; when the correct order is genuinely unknown, running unordered \valuator{} avoids imposing an unwarranted precedence.

\textbf{Is ordered valuation more correct, or only different?} Changed payments are not by themselves evidence that ordered credit is the right credit. We therefore ran a controlled held-out serving study on real FEVER data with two artifact groups that have a genuine serving precedence: retrieval clients provide context and reader clients supply an answer policy that can only correct a label once the retrieval group has surfaced the gold evidence. A precedence knob \(g\in[0,1]\) tunes how strongly reader value is gated on retrieval (\(g{=}0\): no precedence; \(g{=}1\): strong precedence). Each rule scores clients on a validation card, selects a budget-feasible coalition, and is then served on a disjoint held-out test card; the served accuracy is the ground truth, and the same market and budget are used for both rules. Table~\ref{tab:ordered-serving-neutral} reports the outcome. Ordered valuation redistributes credit (ordered-vs-unordered Spearman \(0.725\), selected-set overlap \(0.690\)), but the paired held-out accuracy difference is statistically indistinguishable from zero across budgets at strong precedence (\(\Delta\in[-0.009,+0.011]\), all 95\% CIs covering zero, sign \(\approx\)~50/50). An exact-Shapley small market agrees (\(-0.011\pm0.058\)) and confirms the samplers recover their targets (sampled-ordered vs exact-ordered \(\rho{=}0.958\); sampled-unordered vs exact-symmetric \(\rho{=}0.935\)). We conclude that ordered valuation is \emph{serving-neutral}: it aligns payment with the causal serving order without changing task quality, which is a payment-fairness property rather than an accuracy improvement.

\begin{table}[t]
\centering
\caption{Ordered valuation is serving-neutral. Paired ordered\(-\)unordered held-out FEVER serving accuracy at strong precedence (\(g{=}1\)); every CI covers zero. Credit is nonetheless redistributed (bottom row).}
\label{tab:ordered-serving-neutral}
\scriptsize
\begin{tabular}{lrr}
\toprule
\rowcolor{headrow} Budget & \(\Delta\) accuracy (95\% CI) & \(P(\Delta>0)\) \\
\midrule
1.4 & \(+0.011\pm0.021\) & 0.47 \\
1.7 & \(-0.009\pm0.022\) & 0.40 \\
2.0 & \(+0.004\pm0.014\) & 0.42 \\
\midrule
\rowcolor{keyrow} Credit redistribution & Spearman 0.725 & overlap 0.690 \\
\bottomrule
\end{tabular}
\end{table}

\subsection{Computational Cost}

We decompose one market round into a one-time setup phase and a per-marginal sampling phase. Let \(n\) be the number of clients, \(m\) the number of sampled marginal pairs \((C,i)\), \(k\) the number of redundancy clusters, \(h\in[0,1]\) the direct-audit fraction, \(L\) the number of differentially private (DP) releases per round, and \(R\) the number of evidence records signed. The round cost is
\begin{align}
T(n,m,k,h) ={}& \underbrace{n\,T_{\mathrm{sk}} + n^2 T_{\mathrm{cl}} + n\,T_{\mathrm{reg}}}_{T_{\mathrm{init}}(n)} \nonumber\\
&{}+ m\!\cdot\!\big(2h\,T_{\mathrm{ev}}(n) + (1\!-\!h)\,T_{\mathrm{su}} + T_{\mathrm{fe}} + T_{\mathrm{up}}\big) \nonumber\\
&{}+ L\,T_{\mathrm{dp}} + R\,T_{\mathrm{sg}},
\label{eq:complexity}
\end{align}
where \(T_{\mathrm{sk}}\) is sketch construction per client (MinHash and centroids), \(T_{\mathrm{cl}}\) is pairwise redundancy clustering (linkage over sketch similarity, \(O(n^2)\) worst case), \(T_{\mathrm{reg}}\) is registry write plus provenance commit, \(T_{\mathrm{ev}}(n)\) is one coalition utility call (the only term that scales with \(n\) through retrieval index size or coalition payload), \(T_{\mathrm{su}}\) is one surrogate prediction, \(T_{\mathrm{fe}}\) is feature-vector construction, \(T_{\mathrm{up}}\) is the online surrogate update (online ridge: \(O(d^2)\) where \(d\) is feature dimension), \(T_{\mathrm{dp}}\) is one DP release (clip, noise, accountant update), and \(T_{\mathrm{sg}}\) is one signed-evidence write.
Utility evaluation dominates when \(n\), \(m\), or \(h\) is large; DP and signing scale with round-policy constants. A coarse fit to Table~\ref{tab:scale} gives
\[
T_{\mathrm{step}}(n) \approx 2.0\!\times\!10^{-3}\,n^{1.5}\,\mathrm{s},\quad
T_{\mathrm{init}}(n)\leq 1\,\mathrm{s},
\]
for \(n\geq16\). Table~\ref{tab:complexity-fit} reports measured versus fitted runtime; Table~\ref{tab:tev-microbench} separates fixed-shard and scale-sweep utility-call costs; Table~\ref{tab:msweep} shows that per-marginal evaluation dominates once \(m\) grows.

\begin{table}[t]
\centering
\caption{Complexity-model fit to Table~\ref{tab:scale}. Fitted constants are rounded; the scaling interpretation is intended for \(n\geq16\).}
\label{tab:complexity-fit}
\scriptsize
\begin{tabular}{rrrrr}
\toprule
\rowcolor{headrow} Clients \(n\) & Marginals \(m\) & Measured (s) & Fitted (s) & Rel.\ err. \\
\midrule
8 & 6 & 0.6495 & 0.444 & \(-31.6\%\) \\
8 & 16 & 1.1540 & 0.834 & \(-27.7\%\) \\
16 & 6 & 0.8888 & 0.821 & \(-7.6\%\) \\
16 & 16 & 1.5915 & 1.891 & \(+18.8\%\) \\
32 & 6 & 1.8874 & 1.916 & \(+1.5\%\) \\
32 & 16 & 5.0247 & 4.855 & \(-3.4\%\) \\
64 & 6 & 6.2499 & 4.974 & \(-20.4\%\) \\
64 & 16 & 15.1439 & 13.048 & \(-13.8\%\) \\
\bottomrule
\end{tabular}
\end{table}

\begin{table}[t]
\centering
\caption{Direct \(T_{\mathrm{ev}}(n)\) microbenchmark. Fixed-shard rows keep per-client shard size constant; scale-sweep rows use the same shard schedule as the scalability fit.}
\label{tab:tev-microbench}
\scriptsize
\begin{tabular}{llrr}
\toprule
\rowcolor{headrow} Shard schedule & Clients & Docs/client & ms/call \\
\midrule
Fixed smoke & 8 & 3 & 0.163 \\
Fixed smoke & 32 & 3 & 0.248 \\
Fixed smoke & 64 & 3 & 0.264 \\
Fixed smoke & 100 & 3 & 0.296 \\
Scale sweep & 8 & 22 & 1.371 \\
Scale sweep & 32 & 5 & 0.433 \\
Scale sweep & 64 & 3 & 0.265 \\
Scale sweep & 100 & 3 & 0.291 \\
\bottomrule
\end{tabular}
\end{table}

\begin{table}[t]
\centering
\caption{Projected runtime under the fitted model for \(m\in\{16,64,256,1024\}\). Calls counts assume \(h\!=\!1\) (full direct audits, two utility calls per marginal). Surrogate triage with \(h\!<\!1\) reduces total time proportionally.}
\label{tab:msweep}
\scriptsize
\begin{tabular}{rrrrr}
\toprule
\rowcolor{headrow} Clients \(n\) & Marginals \(m\) & Calls & \(T_{\mathrm{pred}}\) (s) & \(T_{\mathrm{step}}\) share \\
\midrule
64 & 16  & 32  & 13.05  & 99.0\% \\
64 & 64  & 128 & 51.80  & 99.7\% \\
64 & 256 & 512 & 206.83 & 99.9\% \\
64 & 1024 & 2048 & 826.92 & 99.98\% \\
\midrule
100 & 16  & 32  & 24.88  & 99.5\% \\
100 & 64  & 128 & 99.16  & 99.9\% \\
100 & 256 & 512 & 396.30 & 99.97\% \\
100 & 1024 & 2048 & 1584.83 & 99.99\% \\
\bottomrule
\end{tabular}
\end{table}

Combining the fits, \method{} valuation at fixed surrogate audit rate \(h\) costs
\[
T(n,m) = O\!\big(n^{1.5} hm\big) \;+\; O(n^2)
    \;+\; O\!\big(d^2 m\big) \;+\; O(L+R),
\]
where the first term (coalition utility evaluation) is dominant whenever \(hm\!\geq\!\Theta(n^{0.5})\) and where the \(O(n^2)\) clustering term is incurred once per round. Under the default direct-audit budget \(h\,m \leq O(n\log n)\), total cost is \(\tilde O(n^{2.5})\) per round, which is the relevant scaling for sizing the model-call budget against an operator latency SLO.

\subsection{Calibration, Redundancy, and Drift}

LCBs use calibrated intervals rather than raw sampled-marginal variance. In the current FEVER run, naive intervals cover 0.25 of leave-one-out values, while conformal calibration reaches 0.8875 with mean width 0.0408. Retrieval duplicate clustering uses provenance hashes plus
\[
s_{dup}(x,y)=0.65\,s_{tok}(x,y)+0.35\,s_{tri}(x,y),
\]
with adapter deployments adding delta-norm and target-module fingerprints. The hard-paraphrase recall at threshold 0.65 is 0.9575, so these checks are audit triggers, not complete guarantees. Other auxiliary results are consolidated as follows: collusion group caps suppress two- and three-client artificial complementarity in the tested suite; per-type LCB normalization improves adapter rare-client utility from 0.6625 to 0.6863; policy drift is visible when changing retriever or adapter-composition rules; and the software enclave adds only 0.226 ms per bounded signed release. Payments should not be reused across model, retriever, or policy changes.

\section{Payment, Budgeting, and Disputes}

\subsection{Payment Decomposition}

For each client, the ledger stores a decomposed payment:
\begin{align}
S_i &= \widehat{\phi}_i-\lambda\sigma_i-\beta c_i-\gamma\epsilon_i^{priv}
   -\eta\max(d_i,r_i)+\rho s_i, \\
p_i &= scale_B([S_i]_+),
\end{align}
where \(scale_B\) applies the budget-feasible scaling rule. The decomposition is important for review and dispute handling. A client should be able to see whether low payment is due to low estimated value, high uncertainty, duplicate risk, privacy restrictions, verified cost, or manipulation risk. The operator should be able to recompute the exact payment from the contract card and ledger fields.

\subsection{Budget Scaling}

If \(\sum_i \widetilde{p}_i>B\), the rule scales every positive raw payment by the same factor. This preserves payment order among clients with fixed adjustment terms and prevents the market from exceeding the budget. A deployment may add reserve checks after scaling: if a client's scaled payment falls below its verified participation cost, the client can be excluded and the remaining budget recomputed.

\subsection{Dispute Workflow}

A dispute proceeds in five steps. First, the client submits the disputed round id, artifact id, and reason. Second, the auditor retrieves the contract card, registry entry, value ledger, and evidence hashes. Third, the auditor reruns deterministic evaluations under the recorded seed where possible. Fourth, the auditor can request additional hidden probes or provenance review if duplicate, poison, or leakage flags are contested. Fifth, the ledger records a signed decision: unchanged, revalued, rejected, or escalated. The important design point is that the dispute is about auditable data artifacts, not an opaque payment number.

\subsection{Revaluation Policy}

Model, retriever, prompt, and safety-policy changes can alter \(U(C)\). \method{} treats a market round as immutable once settled: the original contract card, model hash, policy hash, utility weights, and evidence hashes define the value of that round. Later model or policy changes create a new round with a new contract card. The default policy is forward-looking revaluation rather than clawback, because clawbacks make participation risky and can punish honest clients for operator-side changes. Clawbacks are reserved for fraud cases where provenance, duplicate, or poison evidence shows that the original submission violated the signed contract. Figure~\ref{fig:multiround} illustrates six-round market dynamics with a slice shift after round three, showing why revaluation should be round-bound rather than silently rewriting old payments.

\begin{figure}[t]
\centering
\includegraphics[width=\linewidth]{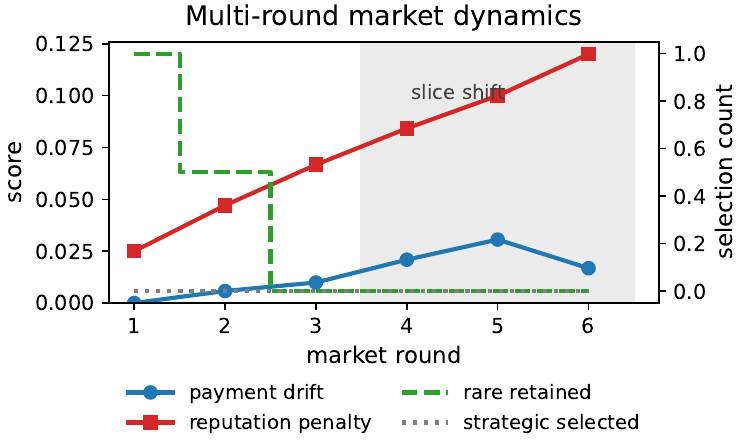}
\caption{Six-round market dynamics with client arrivals and a slice shift after round three, separating payment drift, reputation penalty, rare retention, and strategic selections.}
\label{fig:multiround}
\end{figure}

\subsection{Commit-Reveal and Operator Threats}

The client threat model is not sufficient by itself: an operator could collude with a seller by retrofitting a rare slice, changing hidden probes, or reweighting utility after seeing submissions. \method{} therefore uses a commit-reveal round protocol. Before the submission window opens, the operator publishes and signs a contract-card hash containing utility weights, rare-slice definitions, validation-slice commitments, audit policy, payment formula version, and dispute window. Clients submit artifacts against that hash. After the window closes, the operator reveals the card and evaluation seeds; auditors verify that the revealed card matches the pre-submission commitment. The multi-round stress test includes a post-hoc slice-shift condition; the ledger treats it as a new round, so prior payments remain bound to the original committed card rather than being silently recomputed.

\subsection{Operator and Auditor Adversaries}

The strongest auditability guarantees require more than an honest operator. If the operator selectively evaluates coalitions, the registry and ledger still expose which evaluation calls were made, but value completeness requires public sampling seeds, third-party reruns, or an external auditor. If the operator swaps the contract card, signed pre-window hashes and a public bulletin-board anchor detect the swap. If the operator redacts ledger entries, append-only storage and third-party hash anchoring reveal gaps but cannot reconstruct missing private payloads without client-side evidence. If the operator colludes with a seller, hidden-probe commitments, provenance hashes, and dispute replay limit post-hoc rare-slice and weight manipulation, but economic fairness depends on an independent auditor or buyer-side challenge process. If the auditor is compromised, signatures and public anchors preserve tamper evidence, but dispute decisions require auditor rotation, threshold signatures, or multi-auditor review. If keys are lost or stolen, old evidence remains verifiable only up to the last trusted key-rotation checkpoint. Table~\ref{tab:operator-threats} summarizes this operator/auditor threat surface, the required mitigation, and the residual risk in each case.

\begin{table*}[t]
\centering
\caption{Operator/auditor threat surface.}
\label{tab:operator-threats}
\scriptsize
\setlength{\tabcolsep}{2pt}
\begin{tabular}{@{}p{0.20\textwidth}p{0.25\textwidth}p{0.30\textwidth}p{0.16\textwidth}@{}}
\toprule
\rowcolor{headrow} Adversary action & What survives & Required mitigation & Residual risk \\
\midrule
Selective evaluation & Ledger shows sampled coalitions and missing calls & Public seeds, audit reruns, model-call counters & Private endpoints may be unavailable later \\
Contract-card swap & Pre-window signed hash detects mismatch & Public bulletin board or third-party hash anchor & If no anchor exists, clients rely on operator logs \\
Ledger redaction & Hash-chain gap is detectable & Append-only ledger, external checkpoints & Redacted payload evidence may need client replay \\
Seller/operator collusion & Committed weights and rare slices constrain retrofits & Hidden-probe commitments and buyer/auditor challenge & Collusion before card publication is governance risk \\
Auditor compromise & Signatures and anchors preserve raw evidence integrity & Auditor rotation, threshold review, public dispute summaries & Bad decisions can still delay payment \\
Key failure & Prior checkpoints remain verifiable & Key rotation, HSM/TEE-backed signing, revocation log & Evidence after compromise must be re-attested \\
\bottomrule
\end{tabular}
\end{table*}

\section{Threat Model}

The market assumes clients may be strategic but not omnipotent. They may split identities, copy or paraphrase artifacts, poison data, tune to public probes, inflate costs, exaggerate privacy constraints, or collude. The market assumes that clients cannot forge provenance commitments, cannot break cryptographic hashes, cannot observe hidden audit probes before submitting, and cannot force the operator to evaluate raw private payloads outside the declared interface. The operator is not assumed to be fully trusted for auditability claims. The base experiments assume the operator follows the published contract card, while the deployed protocol relies on signed cards, append-only ledgers, public hash anchors, attestation chains, and external dispute review to detect selective evaluation, card swaps, ledger redaction, and collusion. Table~\ref{tab:threat-model} summarizes the client-side threats, their market mitigations, and the residual risk of each.

\begin{table*}[t]
\centering
\caption{Threat model and mitigation summary.}
\label{tab:threat-model}
\scriptsize
\setlength{\tabcolsep}{2pt}
\begin{tabular}{@{}p{0.18\textwidth}p{0.26\textwidth}p{0.27\textwidth}p{0.20\textwidth}@{}}
\toprule
\rowcolor{headrow} Threat & Client action & Market mitigation & Residual risk \\
\midrule
Sybil split & Divide one portfolio across many identities & Redundancy clusters, group marginal discount, duplicate penalties & Semantic splits may evade weak clustering \\
Duplicate copy & Copy another client's passages or adapter & Provenance hashes, sketch similarity, duplicate discount & Paraphrases require stronger semantic checks \\
Poisoning & Improve public validation while harming hidden behavior & Hidden probes, risk score, audit penalty & Adaptive poisoners can target unknown policies \\
Prompt gaming & Tune prompts to public probes & Paraphrase probes, public-hidden gap, uncertainty discount & Probe leakage weakens defense \\
Privacy gaming & Claim restrictive privacy to reduce evaluation visibility & Lower confidence bound, alternative secure evaluation, reserve checks & Legitimate privacy may also widen intervals \\
Cost inflation & Overstate collection or compute cost & Verified cost, reserve prices, cost penalties & Verification may be domain-specific \\
Collusion & Split complementary artifacts across identities & Pairwise complementarity audit and group caps & True complementarity should not be over-penalized \\
Benchmark leakage & Submit data derived from validation answers & Provenance review and hidden evaluation slices & Perfect leakage detection is impossible \\
\bottomrule
\end{tabular}
\end{table*}

\section{Experimental Details}

\subsection{Baseline Definitions}

Equal payment assigns all clients the same score before budget selection. Volume ranks clients by artifact count or corpus size. Retrieval similarity ranks retrieval clients by similarity to validation queries or gold evidence sketches. Leave-one-client-out estimates \(U(N)-U(N\setminus\{i\})\), which is accurate but expensive and can behave differently from Shapley values when interactions are strong. FL-Shapley approximates marginal contribution over sampled client orderings. Shapley-UCB uses uncertainty-aware seller selection inspired by bandit upper-confidence bounds. \method{} differs by ranking with lower-confidence-bound value plus explicit market adjustments for cost, privacy, duplicate risk, manipulation risk, and scarcity.

\subsection{Experiment Construction Details}

FEVER RAG shards claim-evidence pairs into honest generalists, rare specialists, duplicates, sybils, and poisoners. The held-out serving tracks separate validation-card selection from final test-card scoring and report task metrics without using attack labels. The trained low-rank adapter track uses lightweight PyTorch adapters, while the HF/PEFT path uses \texttt{distilbert-base-uncased} to report real LoRA parameters, runtime, and coalition behavior. The PEFT track proves the interface and accounting path; large multi-adapter foundation-model valuation remains future scale-up work.

\begin{table*}[t]
\centering
\caption{Retrieve--read held-out FEVER serving proxy. Accuracy, macro-F1, evidence exact match (EM), and regret are computed only on the final test card and do not use attack labels. This table is supporting evidence; Table~\ref{tab:generator-backed-rag} is the primary served-pipeline result.}
\label{tab:heldout-serving}
\scriptsize
\setlength{\tabcolsep}{2pt}
\begin{tabular}{llccccrr}
\toprule
\rowcolor{headrow} Attack family & Method & Acc. & Macro-F1 & EM & Regret & Poison & Strategic \\
\midrule
Known flip & FL-Shapley & \(0.3685{\pm}0.0094\) & \(0.3136{\pm}0.0100\) & \(\mathbf{0.1343{\pm}0.0424}\) & \(0.0686{\pm}0.0399\) & 0.6 & 2.4 \\
Known flip & Leave-one-out & \(0.3428{\pm}0.0274\) & \(0.2978{\pm}0.0313\) & \(0.1257{\pm}0.0568\) & \(0.0943{\pm}0.0423\) & 0.6 & 2.4 \\
\rowcolor{keyrow} Known flip & \method{} & \(\mathbf{0.3914{\pm}0.0303}\) & \(\mathbf{0.3490{\pm}0.0498}\) & \(0.0914{\pm}0.0188\) & \(\mathbf{0.0457{\pm}0.0407}\) & \textbf{0.0} & \textbf{0.2} \\
\midrule
Hard paraphrase & FL-Shapley & \(0.3600{\pm}0.0399\) & \(0.3158{\pm}0.0521\) & \(0.0972{\pm}0.0321\) & \(0.0800{\pm}0.0563\) & 0.4 & 2.2 \\
Hard paraphrase & Leave-one-out & \(0.3457{\pm}0.0311\) & \(0.2999{\pm}0.0400\) & \(\mathbf{0.1171{\pm}0.0610}\) & \(0.0914{\pm}0.0368\) & 0.4 & 2.2 \\
\rowcolor{keyrow} Hard paraphrase & \method{} & \(\mathbf{0.3857{\pm}0.0137}\) & \(\mathbf{0.3463{\pm}0.0338}\) & \(0.0828{\pm}0.0243\) & \(\mathbf{0.0543{\pm}0.0340}\) & \textbf{0.0} & \textbf{0.2} \\
\midrule
Adaptive keyword & FL-Shapley & \(0.3714{\pm}0.0237\) & \(0.3252{\pm}0.0420\) & \(0.1029{\pm}0.0375\) & \(0.0657{\pm}0.0431\) & 0.4 & 2.4 \\
Adaptive keyword & Leave-one-out & \(0.3514{\pm}0.0332\) & \(0.3029{\pm}0.0388\) & \(\mathbf{0.1457{\pm}0.0422}\) & \(0.0857{\pm}0.0380\) & 0.8 & 2.2 \\
\rowcolor{keyrow} Adaptive keyword & \method{} & \(\mathbf{0.3857{\pm}0.0317}\) & \(\mathbf{0.3437{\pm}0.0495}\) & \(0.0886{\pm}0.0216\) & \(\mathbf{0.0514{\pm}0.0424}\) & \textbf{0.0} & \textbf{0.2} \\
\bottomrule
\end{tabular}
\end{table*}

\subsection{Hyperparameters and Reporting Conventions}

We use fixed random seeds for each experiment. Budgeted selection uses the same budget within each track for all methods. Risk-adjusted utility uses the same strategic penalty and rare-client bonus within a track. Harmful-client AUC is reported only when harmful labels exist. PEFT runtime fields are reported only when the optional PEFT dependencies and cached \texttt{distilbert-base-uncased} model are available.


\begin{thebibliography}{44}
\providecommand{\url}[1]{#1}
\csname url@samestyle\endcsname
\providecommand{\newblock}{\relax}
\providecommand{\bibinfo}[2]{#2}
\providecommand{\BIBentrySTDinterwordspacing}{\spaceskip=0pt\relax}
\providecommand{\BIBentryALTinterwordstretchfactor}{4}
\providecommand{\BIBentryALTinterwordspacing}{\spaceskip=\fontdimen2\font plus
\BIBentryALTinterwordstretchfactor\fontdimen3\font minus
  \fontdimen4\font\relax}
\providecommand{\BIBforeignlanguage}[2]{{%
\expandafter\ifx\csname l@#1\endcsname\relax
\typeout{** WARNING: IEEEtran.bst: No hyphenation pattern has been}%
\typeout{** loaded for the language `#1'. Using the pattern for}%
\typeout{** the default language instead.}%
\else
\language=\csname l@#1\endcsname
\fi
#2}}
\providecommand{\BIBdecl}{\relax}
\BIBdecl

\bibitem{kairouz2021advancesfl}
P.~Kairouz, H.~B. McMahan, B.~Avent, A.~Bellet, M.~Bennis, A.~N. Bhagoji,
  K.~Bonawitz, Z.~Charles, G.~Cormode, R.~Cummings \emph{et~al.}, ``Advances
  and open problems in federated learning,'' \emph{Foundations and Trends in
  Machine Learning}, vol.~14, no. 1--2, pp. 1--210, 2021.

\bibitem{yang2023flshapleypareto}
X.~Yang, S.~Xiang, C.~Peng \emph{et~al.}, ``Federated learning incentive
  mechanism design via shapley value and pareto optimality,'' \emph{Axioms},
  vol.~12, no.~7, p. 636, 2023.

\bibitem{ghorbani2019datashapley}
A.~Ghorbani and J.~Zou, ``Data shapley: Equitable valuation of data for machine
  learning,'' in \emph{Proceedings of the 36th International Conference on
  Machine Learning}, 2019, pp. 2242--2251.

\bibitem{jia2019efficient}
R.~Jia, D.~Dao, B.~Wang, F.~A. Hubis, N.~Hynes, N.~M. G{\"u}rel, B.~Li,
  C.~Zhang, D.~Song, and C.~J. Spanos, ``Efficient task-specific data valuation
  for nearest neighbor algorithms,'' \emph{Proceedings of the VLDB Endowment},
  vol.~12, no.~11, pp. 1610--1623, 2019.

\bibitem{fan2025fedfmchallenges}
T.~Fan, H.~Gu, X.~Cao \emph{et~al.}, ``Ten challenging problems in federated
  foundation models,'' \emph{IEEE Transactions on Knowledge and Data
  Engineering}, vol.~37, no.~7, pp. 4314--4337, 2025.

\bibitem{shapley1953value}
L.~S. Shapley, ``A value for n-person games,'' in \emph{Contributions to the
  Theory of Games II}.\hskip 1em plus 0.5em minus 0.4em\relax Princeton
  University Press, 1953, pp. 307--317.

\bibitem{zhang2024shapleyucb}
K.~Chen and Z.~Xu, ``Federated learning for data market: Shapley-ucb for seller
  selection and incentives,'' \emph{arXiv preprint arXiv:2410.09107}, 2024.

\bibitem{wang2024fedave}
Z.~Wang \emph{et~al.}, ``Fedave: Adaptive data value evaluation framework for
  collaborative fairness in federated learning,'' \emph{Neurocomputing}, vol.
  574, p. 127227, 2024.

\bibitem{zhang2025fmincentive}
N.~Zhang, X.~Xu, X.~Liu, J.~Wu, and H.~Tang, ``Incentive mechanism of
  foundation model enabled cross-silo federated learning,'' \emph{Scientific
  Reports}, vol.~15, p. 24181, 2025.

\bibitem{vickrey1961counterspeculation}
W.~Vickrey, ``Counterspeculation, auctions, and competitive sealed tenders,''
  \emph{Journal of Finance}, vol.~16, no.~1, pp. 8--37, 1961.

\bibitem{clarke1971multipart}
E.~H. Clarke, ``Multipart pricing of public goods,'' \emph{Public Choice},
  vol.~11, pp. 17--33, 1971.

\bibitem{groves1973incentives}
T.~Groves, ``Incentives in teams,'' \emph{Econometrica}, vol.~41, no.~4, pp.
  617--631, 1973.

\bibitem{miller2005peer}
N.~Miller, P.~Resnick, and R.~Zeckhauser, ``Eliciting informative feedback: The
  peer-prediction method,'' in \emph{Management Science}, vol.~51, no.~9, 2005,
  pp. 1359--1373.

\bibitem{jiang2021flmarket}
Z.~Jiang, Y.~Cao, Y.~Wang, H.~Chen, and C.~Xu, ``{FL-Market}: Trading private
  models in federated learning,'' \emph{arXiv preprint arXiv:2106.04384}, 2021.

\bibitem{fung2018foolsgold}
C.~Fung, C.~J.~M. Yoon, and I.~Beschastnikh, ``Mitigating sybils in federated
  learning poisoning,'' \emph{arXiv preprint arXiv:1808.04866}, 2018.

\bibitem{koh2017influence}
P.~W. Koh and P.~Liang, ``Understanding black-box predictions via influence
  functions,'' in \emph{Proceedings of the 34th International Conference on
  Machine Learning}, 2017, pp. 1885--1894.

\bibitem{yoon2020dvrl}
J.~Yoon, S.~O. Arik, and T.~Pfister, ``Data valuation using reinforcement
  learning,'' in \emph{Proceedings of the 37th International Conference on
  Machine Learning (ICML)}, ser. Proceedings of Machine Learning Research, vol.
  119, 2020, pp. 10\,842--10\,851.

\bibitem{kwon2022betashapley}
Y.~Kwon and J.~Zou, ``Beta shapley: A unified and noise-reduced data valuation
  framework for machine learning,'' in \emph{International Conference on
  Artificial Intelligence and Statistics}, 2022, pp. 8780--8802.

\bibitem{ye2025documentvaluation}
Z.~Ye and H.~Yoganarasimhan, ``Fair document valuation in llm summaries via
  shapley values,'' \emph{arXiv preprint arXiv:2505.23842}, 2025.

\bibitem{han2025ragauctions}
M.~Han, S.~A. Esmaeili, M.~Albert, and H.~Xu, ``Data auctions for retrieval
  augmented generation,'' \emph{arXiv preprint arXiv:2508.16007}, 2025.

\bibitem{lewis2020rag}
P.~Lewis, E.~Perez, A.~Piktus, F.~Petroni, V.~Karpukhin, N.~Goyal,
  H.~K{\"u}ttler, M.~Lewis, W.-t. Yih, T.~Rockt{\"a}schel \emph{et~al.},
  ``Retrieval-augmented generation for knowledge-intensive nlp tasks,'' in
  \emph{Advances in Neural Information Processing Systems}, 2020.

\bibitem{hu2022lora}
E.~J. Hu, Y.~Shen, P.~Wallis, Z.~Allen-Zhu, Y.~Li, S.~Wang, L.~Wang, and
  W.~Chen, ``Lora: Low-rank adaptation of large language models,'' in
  \emph{International Conference on Learning Representations}, 2022.

\bibitem{ouyang2022training}
L.~Ouyang, J.~Wu, X.~Jiang, D.~Almeida, C.~L. Wainwright, P.~Mishkin, C.~Zhang,
  S.~Agarwal, K.~Slama, A.~Ray \emph{et~al.}, ``Training language models to
  follow instructions with human feedback,'' in \emph{Advances in Neural
  Information Processing Systems}, 2022.

\bibitem{cho2024hetlora}
Y.~J. Cho, L.~Liu, Z.~Xu, A.~Fahrezi, and G.~Joshi, ``Heterogeneous lora for
  federated fine-tuning of on-device foundation models,'' \emph{arXiv preprint
  arXiv:2401.06432}, 2024.

\bibitem{chen2024rolora}
S.~Chen, Y.~Ju, H.~Dalal, Z.~Zhu, and A.~Khisti, ``Robust federated finetuning
  of foundation models via alternating minimization of lora,'' \emph{arXiv
  preprint arXiv:2409.02346}, 2024.

\bibitem{singhal2025fedexlora}
R.~Singhal, K.~Ponkshe, and P.~Vepakomma, ``{FedEx-LoRA}: Exact aggregation for
  federated and efficient fine-tuning of foundation models,'' \emph{arXiv
  preprint arXiv:2410.09432}, 2025.

\bibitem{bian2024lorafair}
J.~Bian, L.~Wang, L.~Zhang, and J.~Xu, ``{LoRA-FAIR}: Federated {LoRA}
  fine-tuning with aggregation and initialization refinement,'' \emph{arXiv
  preprint arXiv:2411.14961}, 2024.

\bibitem{wang2024flora}
Y.~Wang \emph{et~al.}, ``{FLoRA}: Federated fine-tuning large language models
  with heterogeneous low-rank adaptations,'' in \emph{Advances in Neural
  Information Processing Systems (NeurIPS)}, 2024.

\bibitem{huang2026sfedlora}
J.~Huang, X.~Wu, T.~He, and Q.~Lao, ``Stabilized fine-tuning with lora in
  federated learning: Mitigating the side effect of client size and rank via
  the scaling factor,'' \emph{arXiv preprint arXiv:2603.08058}, 2026.

\bibitem{kou2026winflora}
M.~Kou, X.~Xia, Z.~Wang, I.~Khalil, R.~Luo, J.~Zhou, and M.~Xue, ``{WinFLoRA}:
  Incentivizing client-adaptive aggregation in federated {LoRA} under privacy
  heterogeneity,'' in \emph{Proceedings of the ACM Web Conference 2026}, 2026,
  pp. 5241--5252.

\bibitem{muhamed2025corag}
A.~Muhamed, M.~Diab, and V.~Smith, ``{CoRAG}: Collaborative retrieval-augmented
  generation,'' in \emph{Proceedings of the 2025 Conference of the Nations of
  the Americas Chapter of the Association for Computational Linguistics
  (NAACL): Short Papers}, 2025, pp. 265--276.

\bibitem{fede4rag2025}
Q.~Mao \emph{et~al.}, ``{FedE4RAG}: Privacy-preserving federated embedding
  learning for retrieval-augmented generation,'' \emph{arXiv preprint
  arXiv:2504.19101}, 2025.

\bibitem{lu2024datameasurements}
C.~Lu, M.~M. Amiri, and R.~Raskar, ``Data measurements for decentralized data
  markets,'' \emph{arXiv preprint arXiv:2406.04257}, 2024.

\bibitem{zheng2024ads}
X.~Zheng, X.~Chang, R.~Jia, and Y.~Tan, ``Towards data valuation via asymmetric
  data shapley,'' \emph{arXiv preprint arXiv:2411.00388}, 2024.

\bibitem{zheng2025ads}
X.~Zheng, Y.~Huang, X.~Chang, R.~Jia, and Y.~Tan, ``Rethinking data value:
  Asymmetric data shapley for structure-aware valuation in data markets and
  machine learning pipelines,'' \emph{arXiv preprint arXiv:2511.12863}, 2025.

\bibitem{chi2024pcwinter}
H.~Chi, Z.~Yang, L.~Zeng, W.~Fan, and Y.~Ma, ``Precedence-constrained winter
  value for effective graph data valuation,'' \emph{arXiv preprint
  arXiv:2402.01943}, 2024.

\bibitem{li2025fedowen}
H.~KhademSohi, H.~Hemmati, J.~Zhou, and S.~Drew, ``Owen sampling accelerates
  contribution estimation in federated learning,'' \emph{arXiv preprint
  arXiv:2508.21261}, 2025.

\bibitem{entropy2026datafree}
A.~Ukaye, M.~Abdu-Aguye, N.~Tastan, and K.~Nandakumar, ``Data-free contribution
  estimation in federated learning using gradient von neumann entropy,''
  \emph{arXiv preprint arXiv:2604.22562}, 2026.

\bibitem{addison2024cfedrag}
P.~Addison, M.-T.~H. Nguyen, T.~Medan, J.~Shah, M.~T. Manzari, B.~McElrone,
  L.~Lalwani, A.~More, S.~Sharma, H.~R. Roth, I.~Yang, C.~Chen, D.~Xu,
  Y.~Cheng, A.~Feng, and Z.~Xu, ``{C-FedRAG}: A confidential federated
  retrieval-augmented generation system,'' \emph{arXiv preprint
  arXiv:2412.13163}, 2024.

\bibitem{yan2026fweb3}
P.~Yan, S.~Liang, Y.~Hua, L.~Jiang, K.~Yu, Y.~Sun, Y.~Zhang, T.~Song, N.~Hu,
  X.~Liang, B.~He, and H.~Guan, ``Fweb3: A practical incentive-aware federated
  learning framework,'' \emph{arXiv preprint arXiv:2603.00666}, 2026.

\bibitem{faigle1992precedence}
U.~Faigle and W.~Kern, ``The {S}hapley value for cooperative games under
  precedence constraints,'' \emph{International Journal of Game Theory},
  vol.~21, no.~3, pp. 249--266, 1992.

\bibitem{owen1977values}
G.~Owen, ``Values of games with a priori unions,'' in \emph{Mathematical
  Economics and Game Theory}.\hskip 1em plus 0.5em minus 0.4em\relax Springer,
  1977, pp. 76--88.

\bibitem{bun2016concentrated}
M.~Bun and T.~Steinke, ``Concentrated differential privacy: Simplifications,
  extensions, and lower bounds,'' in \emph{Theory of Cryptography Conference},
  2016, pp. 635--658.

\bibitem{mironov2017renyi}
I.~Mironov, ``R\'enyi differential privacy,'' in \emph{2017 IEEE 30th Computer
  Security Foundations Symposium (CSF)}.\hskip 1em plus 0.5em minus 0.4em\relax
  IEEE, 2017, pp. 263--275.

\end{thebibliography}
\end{document}